\newcommand{\fixlist}{\addtolength{\itemsep}{-5pt}}
\newcommand{\fourvote}[4]{#1 \succ #2 \succ #3 \succ #4}
\newcommand{\fivevote}[5]{#1 \succ #2 \succ #3 \succ #4 \succ #5}
\newcommand{\cc}{{{{\mathrm{CC}}}}}
\newcommand{\sntv}{{{{\mathrm{SNTV}}}}}
\newcommand{\bloc}{{{{\mathrm{Bloc}}}}}
\newcommand{\kborda}{{{{k\hbox{-}\mathrm{Borda}}}}}
\newcommand{\rep}{{{\mathrm{rep}}}}
\newcommand{\pos}{{{{\mathrm{pos}}}}}
\newcommand{\borda}{{{{\mathrm{B}}}}}
\def\row#1#2{{#1}_1,\ldots ,{#1}_{#2}}
\title{\bf Properties of Multiwinner Voting Rules}
\author{Edith Elkind\\
        University of Oxford\\
        United Kingdom
 \and  Piotr Faliszewski\\
       AGH University\\
       Krakow, Poland
\and Piotr Skowron\\
     University of Warsaw\\
     Poland
\and Arkadii Slinko\\
     University of Auckland\\
     New Zealand
}
\newtheorem{theorem}{Theorem}
\newtheorem{definition}{Definition}
\newtheorem{proposition}[theorem]{Proposition}
\newtheorem{corollary}[theorem]{Corollary}
\newtheorem{example}{Example}
\newcommand{\np}{{\mathrm{NP}}}
\newcommand{\naturals}{{{\mathbb{N}}}}
\newcommand{\calR}{{{\mathcal{R}}}}
\begin{document}

\maketitle

\begin{abstract}
  The goal of this paper is to propose and study properties of
  multiwinner voting rules which can be consider as generalisations of single-winner scoring voting rules.  We consider SNTV,
  Bloc, $k$-Borda, STV, and several variants of Chamberlin--Courant's
  and Monroe's rules and their approximations. We identify two broad natural classes of
  multiwinner score-based rules, and show that many of the existing rules can be captured
  by one or both of these approaches.  We then formulate a number of
  desirable properties of multiwinner rules, and evaluate the rules we
  consider with respect to these properties.
\end{abstract}

\section{Introduction}\label{sec:introduction}
\noindent
There are many situations where societies need to select a small set
of entities from a larger group. For example, in indirect democracies
people choose representatives to govern on their behalf, companies
select groups of products to promote to their
customers~\cite{bou-lu:c:chamberlin-courant,bou-lu:c:compression}, web
search engines decide which pages to display for a given
query~\cite{dwo-kum-nao-siv:c:rank-aggregation}, and applicants for a
job (e.g., a tenure-track position at a university) are short-listed
prior to conducting interviews.  For all these tasks we need formal
rules to perform the selection, and the desirable properties of such
rules may depend on the task at hand.  We view these selection rules
as multiwinner voting rules which, given individual preferences,
output groups of winners (which we call
\emph{committees}).  %

Multi-winner elections are even more ubiquitous than single-winner
ones, but much less studied. They were implicitly considered under the
umbrella of choice functions
\cite{arr:b:polsci:social-choice,fis:b:social-choice} but in this
model the size of the elected committee could not be controlled and it
was Debord~\cite{deb:j:prudent} and Felsenthal and
Maoz~\cite{fel-mao:j:norms} who introduced several $k$-choice
functions that elect committees of size exactly $k$, and investigated
their properties. However even within this narrower framework several
quite distinct models happily coexist, which makes a simultaneous
study of them difficult.  One obvious classification is based on the
type of the input.  There are preference-based rules (for which inputs
are sequences of linear orders; see, e.g., the work of Brams and
Fishburn~\cite{bra-fis:b:polsci:voting-procedures}), approval-based
rules (for which inputs are sequences of dichotomies; see, e.g., the
overview of Kilgour~\cite{kil-handbook}), tournament-based rules (for
which inputs are either tournaments or weighted tournaments; see the
book of Laslier for a general overview of tournament-based
rules~\cite{las:b:tournament-solutions}).  With a few exceptions,
majority of papers that study (properties of) multiwinner rules are
devoted to approval-based
rules~\cite{kil-handbook,kil-mar:j:minimax-approval,aziz-elk-etc,bra-kil-san:j:minimax-approval,car-kal-mar:c:minimax-approval}
and to rules based on various forms of the Condrocet
principle~\cite{fis:j:condorcet-committee,geh:j:condorcet-committee,rat:j:condorcet-inconsistencies,elk-lan-saf:c:condorcet-sets}.
On the other hand, our paper focuses on rules that, in some broader
sense, can be seen as extensions of positional scoring rules.
Our goal is to review some natural preference-based multiwinner rules,
to present a uniform framework for their study, and to propose a set
of natural properties (axioms) against which these rules can be
judged. In effect, we focus on the model where voters have ordinal
preferences.

We have picked ten voting rules as examples of different ideas
pertaining to (scoring-based) multiwinner elections: STV, SNTV,
$k$-Borda, Bloc, three variants of Chamberlin--Courant's
rule~\cite{cha-cou:j:cc,bet-sli-uhl:j:mon-cc,bou-lu:c:chamberlin-courant},
and three variants of Monroe's
rule~\cite{mon:j:monroe,bet-sli-uhl:j:mon-cc,sko-fal-sli:j:multiwinner}. STV
and SNTV are well-known rules that are used for parliamentary
elections in some countries;\footnote{For example, the upper house of
  the Parliament of Australia uses a variant of STV; a variant of SNTV
  is used, e.g., in Puerto Rico.} Bloc is a rule that asks voters to
specify their favorite committee of $k$ candidates and selects those
$k$ candidates that were nominated more often than others; $k$-Borda
picks $k$ alternatives with the highest Borda scores and is
representative of rules used for picking $k$ finalists in a
competition (indeed, Formula~1 racing and Eurovision song contest use
scoring rules very similar to Borda).  Chamberlin--Courant's rule and
Monroe's rule are examples of rules that, like STV, focus on
proportional representation, but are based on explicitly assigning a
committee member (a representative) to each voter.
We also consider two rules based on approximation algorithms, for
the Chamberlin--Courant's rule~\cite{bou-lu:c:chamberlin-courant} and for
the Monroe's rule~\cite{sko-fal-sli:j:multiwinner}. We consider them as voting rules in their own right. All these rules can
be seen as being loosely based, in some way, on single-winner scoring
protocols. 

We are interested in judging the selected multiwinner rules  with respect to their
applicability in the following settings:
\begin{description}[leftmargin=15pt]%
\fixlist
\item[Parliamentary Elections.] Voting rules for such elections should
  respect the ``one person, one vote'' principle. %
  This is reflected in the requirement
  that each elected member should represent, roughly, the same number
  of voters. Some such rules are based on electoral districts, i.e.,
  separate (possibly multiwinner) elections are held in different parts of
  the country, while others treat the whole country as a single
  constituency, and focus on proportional representation of different population groups.

\item[Shortlisting.] Consider a situation where a position is filled
  at a university. Each faculty member ranks applicants in order to
  create a short-list of those to be invited for an interview.  One of the
  important requirements in this case is that if
  some candidate is shortlisted when $k$ applicants are selected, then
  this candidate should also be shortlisted if the list were
  extended to $k+1$ applicants.

\item[Movie selection.] Based on rankings provided by different
  customer groups, an airline has to decide which (few) movies to
  offer on their long-distance flights. It is important that each
  passenger finds something satisfying.
  This task is similar to parliamentary elections, but without the
  need to worry that each movie would be watched by the same number of
  people.
  It is, however, quite different from shortlisting: If there are two
  similar candidates, then for shortlisting we should, typically, take
  either both or neither, whereas in the context of movie selection it
  makes sense to pick at most one of them. Skowron et
  al.~\cite{sko-fal-lan:c:multiwinner} expand this view of multiwinner
  elections and provide a number of other examples and applications.

\end{description}

We study properties of voting rules that are important in the
above-listed settings. We introduce committee monotonicity, solid
coalitions property, consensus committee property, and unanimity, and
adapt the standard notions of monotonicity, homogeneity, and
consistency to the multiwinner framework. We discuss related
literature and compare our approaches in Section~\ref{sec:literature}.

Our paper is a preliminary attempt to give a formal framework for the
study of preference-based multiwinner rules. Thus we use the word
\emph{axiom} quite freely, without meaning that it should be a
\emph{normative requirement}.  Our work focuses on multiwinner rules
that are based on scoring protocols. Such rules are, in some deep
sense, very different from those that are based on the Condorcet
criterion (indeed, in the single-winner case, no scoring protocol is
Condorcet consistent; in effect, a multiwinner rule that degenerates to
the case of a scoring rule for single-member committees cannot be
Condorcet consistent in any natural way). It would be very interesting
to apply our framework to Condorcet-based rules. We leave this as
future work.

The paper is organized as follows. In Section~\ref{sec:prelim}
we introduce the basic terminology used in this paper; in
Section~\ref{sec:rules} we define the rules that we study and
put forward two ways of classifying them.  
In Section~\ref{sec:axioms}, we define
several properties of multiwinner rules
and in Sections~\ref{sec:com-mon}--\ref{sec:con-hom} we study
particular groups of these properties in detail.
We present related literature in Section~\ref{sec:literature} and
conclude  in Section~\ref{sec:conclusions}.

\section{Preliminaries}\label{sec:prelim}
\noindent
An {\em election} is a pair $E = (C,V)$, 
where $C = \{c_1, \ldots, c_m\}$ 
is a set of {\em candidates} and $V = (v_1, \ldots, v_n)$ is a sequence of
{\em voters}. Each voter is described by a {\em preference order},   
which is a ranking of the candidates from the most desirable one to the
least desirable one. We denote the position of a candidate $c\in C$ 
in the preference order of a voter $v\in V$ by $\pos_v(c)$.
If $V_1$ and $V_2$ are two sequences of voters over
the same candidate set $C$, then $V_1+V_2$ denotes the
concatenation of $V_1$ and $V_2$. If $V$ is a sequence of voters and
$t$ is an integer, then $tV$ denotes the concatenation of $t$ copies of
$V$. For $E_1 = (C,V_1)$ and $E_2 = (C,V_2)$, we write
$E_1+E_2$ to denote $(C,V_1+V_2)$, and for $E = (C,V)$ and a positive
integer $t$, we write $tE$ to denote $(C,tV)$.
For an integer $n$, we denote $\{1, \ldots, n\}$ by $[n]$.
A {\em multiwinner voting rule} $\calR$ is a function that given an
election $E = (C,V)$ and a positive integer $k$, $k \leq \|C\|$,
returns a set $\calR(E,k)$ of $k$-element subsets of $C$, which we
call \emph{committees}. That is, a rule returns a set of committees
that are tied-for-winning.  In practice, one would need to combine
such a rule with a tie-breaking mechanism but, for simplicity, we
mostly disregard this issue here.  Brams and Fishburn
\cite{bra-fis:b:polsci:voting-procedures} introduced \emph{choose-$k$}
rules but their definition stipulates that such a rule selects {\em at
  least} $k$ alternatives. Two early papers that focus on rules
selecting committees of size {\em exactly} $k$ are those of
Debord~\cite{deb:j:prudent} and of Felsenthal and
Maoz~\cite{fel-mao:j:norms}.
We use the same approach and when we
need to emphasize the size of the committee, we use the term
\emph{$k$-committee selection rules}.

Requiring multiwinner rules to pick committees of exactly a given size
is natural if, for example, the goal is to elect a parliament whose
size is fixed by the constitution. However, as a consequence, we are
sometimes forced to elect Pareto-dominated candidates (e.g., if all
voters unanimously rank the candidates in the same order and $k > 1$).
Alternatively, we could require $\calR(E,k)$ to return committees of
up-to-$k$ members.  The latter approach is also studied in the
literature (either explicitly or implicitly), but we adopt the former
one due to its simplicity and applicability in our settings of
interest. %

\section{Multiwinner Voting Rules}\label{sec:rules}
\noindent
We now provide definitions of the multiwinner rules that we study and
discuss two general ways of defining them.

\subsection{Common Multiwinner Rules}
\noindent
Many multiwinner rules rely on ideas from single-winner rules,
so let us review these first. Many single-winner rules calculate the scores of alternatives to decide which one is best. Here are some most popular ways
of computing candidate scores.
\begin{description}[leftmargin=15pt]\fixlist
\item[Plurality score.] The plurality score of a candidate $c$ is the number
  of voters that rank $c$ first.
\item[${\bf t}$-approval score.] Let $t$ be a positive
  integer. The $t$-approval score of a candidate $c$ is the number of
  voters that rank $c$ among the top $t$ positions.
\item[Borda score.] Let $v$ be a vote over
  a candidate set $C$. The Borda score of a candidate
  $c\in C$ in $v$ is $\|C\|-\pos_v(c)$.
The  Borda score of $c$ in an election $E = (C,V)$ is the
  sum of $c$'s Borda scores from all voters in $V$.  
\item[${\bf s}$-score.] The tree above types of score are special
  cases of general scoring protocols. Consider a setting with $m$
  candidates and score vector ${\bf s}=(\row sm)$, where $s_1\ge
  s_2\ge \ldots\ge s_m$. We define the ${\bf s}$-score of an
  alternative $a$ in a profile $(\row vn)$ as
\[
\text{sc}_{\bf s}(a)= \sum_{i=1}^n s_{\text{pos}_{v_i}(a)}.
\] 
It is immediate to see that plurality score is simply the $(1,0, \ldots,
0)$-score, $t$-approval score is
the $(\underbrace{1,\ldots,1}_t,0,\ldots,0)$-score, and Borda score is
the $(m-1, m-2, \ldots, 0)$-score.
\end{description}
\vspace{-1mm}

Given these definitions, we are ready to describe the multiwinner
rules that we focus on in this paper.  Let $E = (C,V)$ be an election
and let $k \in [\|C\|]$ be the size of the committee that we seek.
We assume the parallel-universes tie-breaking
\cite{con-rog-xia:c:mle}, i.e., our rules return all the committees
that could result from some breaking of the ties occurring
during the computation of the rule.\smallskip

\begin{description}
\item[Single Transferable Vote (STV).]
STV is a %
multistage elimination rule that works as follows. 
If there is a candidate $c$ whose Plurality score
is at least %
$q = \left\lfloor \frac{\|V\|}{k+1} \right\rfloor + 1$
(the so-called Droop quota), 
we do the following: (a) include $c$ in the winning committee, (b) delete $q$ votes
where $c$ is ranked first, and (c) remove $c$ from all the
remaining votes.  If each candidate's Plurality score
is less than $q$, a candidate with the lowest Plurality score is
deleted from all votes. %
(There are also many other variants of STV;
we point the reader to the work of Tideman and Richardson~\cite{tid-ric:j:stv} for details.)%

\item[Single Nontransferable Vote (SNTV).]
Under SNTV, we return the %
$k$ candidates with the highest Plurality scores (thus one can think
of SNTV as simply $k$-Plurality).

\item[Bloc.]
Under Bloc, we return the %
$k$ candidates with the highest
$k$-approval scores. %
\item[${\bf k}$-Borda.]
Under $k$-Borda, we return the %
$k$ candidates with the highest Borda
scores. %
Debord~\cite{deb:j:k-borda} provided an axiomatic
characterization of this rule.%
\item[Chamberlin--Courant's and Monroe's Rules.]
These rules explicitly aim at proportional representation. The main idea is to provide an optimal assignment of
committee members to voters by using a satisfaction function to
measure the quality of the assignment.

A {\em satisfaction function} is a 
nonincreasing mapping
$\alpha \colon \naturals \rightarrow \naturals$.
Intuitively, $\alpha(i)$ is a voter's satisfaction from
being represented by a candidate that this voter ranks in position $i$.
We focus on the Borda satisfaction function, which for $m$ candidates
is defined as $\alpha^m_\borda(i) = m-i$.

Let $k$ be the target size of the committee. 
A function $\Phi \colon V \rightarrow C$ is called an {\em assignment function} and it is called a $k$-{\em assignment function} if
$\|\Phi(V)\| \leq k$. Intuitively, %
$\Phi(V)$ is the elected committee where voter $v$ is
represented by candidate $\Phi(v)$. There are several ways to
compute the societal satisfaction from the assignment; we focus on the
following two:
\[
  \ell_1(\Phi) = \sum_{v \in V} \alpha( \pos_v(\Phi(v)) ), \qquad
  \ell_{\min}(\Phi) = \min_{v \in V}( \alpha(\pos_v(\Phi(v)) ),
\]
where $\alpha$ is the given satisfaction function.
The former one, $\ell_1(\Phi)$, is a utilitarian measure, which
sums the satisfactions of all the voters, and the latter one,
$\ell_{\min}(\Phi)$, is an egalitarian measure, which consider the
satisfaction of the least satisfied voter.

Let $\alpha$ be a satisfaction function and let $\ell$ be 
$\ell_1$ or $\ell_{\min}$. Chamberlin--Courant's rule with parameters $\ell$ and
$\alpha$ ($\ell$-$\alpha$-CC)  finds an assignment function
$\Phi$ that maximizes $\ell(\Phi)$ and declares  the
candidates in $\Phi(V)$ to be the winning committee. If $\|\Phi(V)\| < k$, the rule fills
in the missing committee members in an arbitrary way and outputs all resulting committees.
$\ell$-$\alpha$-Monroe's rule is defined
in the same way, %
except that we optimize over
assignment functions that additionally satisfy the
so-called Monroe criterion, which requires that 
$ \lfloor \frac{n}{k} \rfloor\leq \|\Phi^{-1}(c)\|\leq \lceil
\frac{n}{k} \rceil$ for each elected candidate $c$.  To simplify
notation, we omit $\alpha^m_\borda$ when referring to Monroe/CC rule
with the Borda satisfaction function.

For Chamberlin--Courant's rule, for each set of candidates $C'
\subseteq C$ we define the assignment function $\Phi^\cc(C')$ so that for
each voter $v$, $\Phi^\cc(C')(v)$ is $v$'s top candidate in $C'$.
If $W$ is a winning committee under
Chamberlin--Courant's rule, then $\Phi^\cc(W)$ is an optimal
assignment function.

The utilitarian variants of the rules (i.e., $\ell_1$-CC and
$\ell_1$-Monroe) were introduced by Chamberlin and
Courant~\cite{cha-cou:j:cc} and by Monroe~\cite{mon:j:monroe},
respectively. The egalitarian variants were introduced by Betzler et
al.~\cite{bet-sli-uhl:j:mon-cc}.  Unfortunately, these rules are hard
to compute, irrespective of tie-breaking, both for Borda
satisfaction function~\cite{bou-lu:c:chamberlin-courant,bet-sli-uhl:j:mon-cc}
and for various approval-based satisfaction
functions~\cite{pro-ros-zoh:j:proportional-representation,bet-sli-uhl:j:mon-cc}.
\item[Approximate Variants of ${\boldsymbol\ell_{\bf 1}}$-Monroe and
  ${\boldsymbol\ell_{\bf 1}}$-CC.]
Hardness results for $\ell_1$-CC and
$\ell_1$-Monroe inspired research on designing efficient
approximation algorithms for these
rules~\cite{bou-lu:c:chamberlin-courant,sko-fal-sli:j:multiwinner}.
Here, in the spirit of Caragiannis et
al.~\cite{car-kak-kar-pro:c:dodgson-acceptable}, we consider these
algorithms as full-fledged multiwinner rules.

We refer to the rules based on approximation algorithms for
$\ell_1$-CC and $\ell_1$-Monroe as Greedy-CC and Greedy-Monroe,
respectively. Greedy-CC was proposed by Lu and
Boutilier~\cite{bou-lu:c:chamberlin-courant} and Greedy-Monroe by
Skowron et al.~\cite{sko-fal-sli:j:multiwinner}. Both rules use the
Borda satisfaction function, aggregated in the utilitarian way, using
$\ell_1$. They proceed in $k$ iterations, in which they build sets
$\emptyset = W_0 \subset W_1 \subset W_2 \subset \cdots \subset W_k$,
and declare $W_k$ to be the winning committee. In the $i$-th
iteration, $i\in [k]$, Greedy-CC picks a candidate $c_i \in C
\setminus W_{i-1}$ that maximizes
$\ell_1(\Phi^\cc(W_{i-1}\cup\{c_i\}))$, where $W_i = W_{i-1} \cup
\{c_i\}$.  In particular, $c_1$ is an alternative with the highest
Borda score.

Greedy-Monroe, in addition to the sets $W_0, \ldots, W_k$,
also maintains sets of voters $\emptyset = V_0 \subset V_1 \subset
\cdots \subset V_k = V$, such that, after the $i$-th iteration, $V_i$
is the set of voters for which the rule has already assigned
candidates. In the $i$-th iteration, the rule picks a number $n_i \in
\{\lceil \frac{n}{k}\rceil, \lfloor \frac{n}{k}\rfloor\}$ (see below
for the choice criterion) and then picks a candidate $c_i \in C
\setminus W_{i-1}$ and a group $V'$ of $n_i$ voters 
from $V\setminus V_{i-1}$ that together maximize the Borda score of $c_i$ in
$V'$. The rule sets $W_i = W_{i-1}\cup \{c_i\}$ and $V_i = V_{i-1} \cup
V'$ (intuitively, Greedy-Monroe assigns $c_i$ to the voters in
$V'$). Regarding the choice of $n_i$, if $n$ is of the form $kn'+n''$,
where $0 \leq n'' < k$, then Greedy-Monroe picks $\lceil
\frac{n}{k}\rceil$ for the first $n''$ iterations and picks $\lfloor
\frac{n}{k}\rfloor$ for the remaining ones. In particular, $c_1$ is the Borda winner of a group $V_1$ of $n_1$ voters which maximizes the Borda score across all subsets of voters of size $n_1$.

Greedy-CC and Greedy-Monroe output committees that approximate those
output by $\ell_1$-CC and $\ell_1$-Monroe.  In particular, Greedy-CC finds a
committee $W$ such that the satisfaction of the voters is at least
$1-\frac{1}{e}$ of the satisfaction achieved under
$\ell_1$-CC~\cite{bou-lu:c:chamberlin-courant}, and Greedy-Monroe
finds a committee that achieves at least a $1-\frac{k}{2m-1} - \frac{H_k}{k}$ fraction of the satisfaction given by
$\ell_1$-Monroe, where $H_k=\sum_{i=1}^{k}\frac{1}{k}$~\cite{sko-fal-sli:j:multiwinner} (for many practical setting, this value is quite close to $1$). 
These rules are efficiently computable in the sense that we can output 
some winning committee in polynomial time; however, their computational complexity under
parallel-universes tie-breaking is not known.
\end{description}

\noindent
\noindent\textbf{Hardness of Winner Determination.}\quad
For $\ell_1$-Monroe and $\ell_1$-CC,  their hardness
of winner determination is known~\cite{pro-ros-zoh:j:proportional-representation,bou-lu:c:chamberlin-courant,bet-sli-uhl:j:mon-cc,sko-fal-sli:j:multiwinner}. Other
rules that we study, on the surface, are polynomial-time
computable. However, since in this paper we very broadly apply the
parallel-universes tie-breaking, this polynomial-time complexity is
not necessarily obvious. For SNTV, Bloc, and $k$-Borda, even with this
tie-breaking, winner determination is computationally easy. On the
other hand, for STV the problem is
$\np$-hard~\cite{con-rog-xia:c:mle}. For Greedy-CC and Greedy-Monroe
the answer is currently unclear, but for both of these rules there are
simple tie-breaking mechanisms under which they are polynomial-time
computable and which do not break any of the axiomatic properties that
we study.

\subsection{Two Types of Multiwinner Rules}\label{sec:two-types}
\noindent
Perhaps surprisingly, it turns out that many of the 
rules introduced so far have very similar internal structure. Below we present two
natural ways of identifying these similarities.\smallskip

\noindent\textbf{Best-${\bf k}$ Rules.}\quad 
SNTV and $k$-Borda are natural extensions of Plurality and Borda to
the multiwinner setting: We sort the candidates in the order of
decreasing scores (with further parallel-universes tie-breaking if needed) 
and pick the top $k$ ones. \par

We remind the reader that a {\em social preference function} $F$ is a
function that, given an election $E = (C,V)$, returns a set of tied
linear orders over $C$ and a social welfare function returns just one
but maybe non-strict linear order.  Breaking ties in the latter using
the parallel-universes tie-breaking we can also convert a social
welfare function into a social preference function (in effect, we will
often treat social welfare functions as special cases of social
preference functions).

\begin{definition}
We say that a given multiwinner rule $\calR$ is a \emph{best-$k$ rule} if there is a
social preference function $F$ such that for each $k \in [m]$, 
a set $W$ is in $\calR(E, k)$ if and only if $\|W\|=k$ and 
there is an order $\mathord{\succ}$ in
$F(E)$ such that $c \succ d$
for each $c \in W$ and $d \in C \setminus W$.
\end{definition}

SNTV and $k$-Borda are best-$k$ rules.  Indeed, a class of best-$k$ rules can be defined by providing a score vector 
${\bf s}=(\row sm)$, where $s_1\ge s_2\ge \ldots\ge s_m$. %
This defines a social welfare function which ranks all alternatives in
accord with their ${\bf s}$-scores, which gives us a best-$k$
rule. As we will see later, perhaps unexpectedly, Bloc is not a
best-$k$ rule.

We can also define a best-$k$ rule based on the social preference function known as
the Kemeny ranking~\cite{kem:j:no-numbers}, and, somewhat surprisingly, we will later
note that Greedy-CC is also a best-$k$ rule. 
Thus, best-$k$ rules are a more diverse group than one
might at first expect.\smallskip %

\noindent\textbf{Committee Scoring Rules.}\quad
Both $k$-Borda and $\ell_1$-CC can be viewed as generalizations of
the Borda rule to the multiwinner case. Here we introduce a class of
{\em committee scoring rules}, which generalize single-winner
scoring rules capturing $k$-Borda, $\ell_1$-CC, and many
other rules. We believe that identifying committee scoring rules is
an important conceptual contribution of this paper.

Consider an election $E = (C,V)$ where we want to pick a committee of
size $k$ out of $m = \|C\|$ candidates. A $k$-winner committee scoring
rule is defined via a {\em committee scoring function} $f$, $f\colon
[m]^k \rightarrow \naturals$, as follows. Given a committee
$S$ and a voter $v$, we define $\pos_v(S)$ to be the vector $(i_1,
\ldots, i_k)$ resulting from sorting the set $\{\pos_v(c) \mid c \in
S\}$ in the nondecreasing order. The winning committees
are the ones that maximize the sum:
\[
\text{sc}_f(S)=\sum_{v \in V}f(\pos_v(S)),
\]
which we call the {\em score} of the committee $S$ under $f$.

Just as for single-winner scoring rules, we require a certain form of
monotonicity with respect to the values of $f$. Specifically, 
if $I = (\row ik)$ and $J = (\row jk)$ are two increasing
sequences of numbers from $[m]$, we say that $I \succeq J$ if and
only if $(i_1\le j_1)\land \ldots \land (i_k\le j_k)$, and we require that 
$I \succeq J$ implies $f(I) \geq f(J)$.

\begin{example}
\label{ex1}
  Let $m$ be the number of candidates and let $k$ be the target size of the committee.
  Define $\alpha_\ell\colon [m] \rightarrow
  \{0, 1\}$ by setting $\alpha_\ell(i) = 1$ if $i \leq \ell$ 
  and $\alpha_\ell(i) = 0$ otherwise. For each
  $i\in [m]$, define $\beta(i) = m-i$.
  SNTV, Bloc, $k$-Borda, and $\ell_1$-CC are committee scoring rules
  defined by:
  \begin{align*}
    f_\sntv(\row ik) &=%
    \alpha_1(i_1),  \\
    f_\bloc(\row ik) &= \textstyle\sum_{t=1}^{k}\alpha_k(i_t), \\
    f_\kborda(\row ik) &= \textstyle\sum_{t=1}^k\beta(i_t), %
    \\
    f_\cc(\row ik) &= \beta(i_1).
  \end{align*}
\end{example}

\noindent
We note that in the case of $f_\sntv$ and $f_\kborda$, the function
$f$ has the form:
 \begin{equation}
 \label{alphas}
 f(\row ik)=\gamma(i_1)+\ldots+\gamma(i_k),
 \end{equation}
 for some non-increasing function $\gamma $ not depending on $k$. This
 is immediate for $f_\kborda$ and for $f_\sntv$ it follows by the
 nature of $\alpha_1$. On the other hand, $f_\cc$ is clearly not of
 the given form, and neither is $f_\bloc$ because function $\alpha_k$
 depends on $k$.  We refer to committee scoring functions of the
 form~\eqref{alphas} as \emph{separable}.  

 Let us focus on a separable committee scoring rule with a scoring
 function of the form~\eqref{alphas}, and let us denote
 $\gamma(i)=s_{i}$, $i=1,\ldots,m$. We form a scoring vector
 $\mathbf{s} = (s_1, \ldots s_m)$; since $\gamma$ is non-increasing,
 we have $s_1\ge s_2\ge\cdots\ge s_m$ and, so, $\mathbf{s}$ is indeed
 a valid scoring vector.  Now, given some committee $S=\{\row ak\}$,
 it is easy to verify that the total score of $S$ at some profile $V =
 (\row vn)$ is:
\[
\sum_{i=1}^n\sum_{j=1}^k s_{\text{pos}(a_j,v_i)}=\sum_{i=1}^k \text{sc}_{\bf s}(a_i).
\]
Hence, the voting rule is the best-$k$ rule induced by the scoring
social welfare function defined by the score vector ${\bf s}$. In effect,
we have the following theorem.

 \begin{theorem}
 \label{separable-kbest}
 Every separable $k$-committee scoring rule is a $k$-best rule for
 some scoring social welfare
 function.%
 \end{theorem}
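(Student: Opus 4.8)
The plan is to read the proof essentially off the decomposition identity stated just before the theorem, being a little more careful about tie-breaking. Given a separable $k$-committee scoring rule with scoring function $f(\row ik)=\gamma(i_1)+\cdots+\gamma(i_k)$ for a non-increasing $\gamma\colon[m]\to\naturals$, I would first set $s_i:=\gamma(i)$ for $i\in[m]$ and form the vector $\mathbf{s}=(s_1,\ldots,s_m)$. Since $\gamma$ is non-increasing, $s_1\ge s_2\ge\cdots\ge s_m$, so $\mathbf{s}$ is a valid scoring vector, and it induces a scoring social welfare function $F$ that ranks the candidates by their $\mathbf{s}$-scores, with parallel-universes tie-breaking among candidates of equal $\mathbf{s}$-score.

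Next I would record the key identity: for any committee $S=\{\row ak\}$ and any profile $V=(\row vn)$,
\[
\text{sc}_f(S)=\sum_{i=1}^{n}\sum_{j=1}^{k}s_{\pos_{v_i}(a_j)}=\sum_{j=1}^{k}\text{sc}_{\mathbf{s}}(a_j),
\]
because in each vote $v_i$ the sorted position vector $\pos_{v_i}(S)$ is just the multiset $\{\pos_{v_i}(a_1),\ldots,\pos_{v_i}(a_k)\}$ and $f$ sums the $s$-values over it; swapping the order of summation then groups the contributions by candidate. Thus the committee score of $S$ equals the sum of the individual $\mathbf{s}$-scores of its members.

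It then remains to verify that the family of committees maximizing $\text{sc}_f$ coincides with the family of best-$k$ winners for $F$. Let $\tau$ denote the $k$-th largest value of the multiset $\{\text{sc}_{\mathbf{s}}(c)\mid c\in C\}$. A $k$-element set $S$ maximizes $\sum_{a\in S}\text{sc}_{\mathbf{s}}(a)$ if and only if it contains every candidate whose $\mathbf{s}$-score exceeds $\tau$ and fills its remaining slots with candidates of $\mathbf{s}$-score exactly $\tau$; and this is precisely the condition that some linear order in $F(E)$ — i.e., some refinement of the weak order by $\mathbf{s}$-score — places all members of $S$ strictly above all of $C\setminus S$, which is the defining property of $S$ being a best-$k$ winner for $F$. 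Hence the separable rule is exactly the best-$k$ rule induced by $F$, which proves the theorem.

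I do not expect a genuine obstacle here: the argument is a bookkeeping identity followed by an elementary ``take the top $k$'' observation. The only point that deserves care is reconciling the two notions of ties, namely checking that the parallel-universes tie-breaking folded into the social-preference-function (best-$k$) formulation yields the same committees as the bare ``maximize the sum'' formulation of committee scoring rules; the threshold description in terms of $\tau$ above is what makes that matching transparent.
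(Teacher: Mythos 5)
Your proof is correct and follows essentially the same route as the paper: the paper also sets $s_i=\gamma(i)$, uses the identity $\text{sc}_f(S)=\sum_{a\in S}\text{sc}_{\mathbf{s}}(a)$ obtained by swapping the order of summation, and concludes that the rule is the best-$k$ rule for the induced scoring social welfare function. Your extra care in matching the tie-breaking on both sides (the threshold argument with $\tau$) is a detail the paper leaves implicit, but it does not change the argument.
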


 According to our definition, Bloc is not a separable committee
 scoring rule and, in some ways, it is quite distinct from separable
 rules (e.g., it is not a best-$k$ rule; we give another example that
 Bloc stands out at the end of Section~\ref{sec:axioms}). On the other
 hand, at the formal level Bloc does look very similar to separable
 committee scoring rules and it shares a number of features with
 them. To capture this fact, we introduce the notion of a weakly
 separable committee scoring rule.

 \begin{definition}\label{def:weakly}
   We say that a committee scoring rule $\calR$, defined through
   scoring function $f$, is weakly separable if there exists a family
   $\{ \alpha^m_k \mid m\in \naturals, k \leq m, \alpha^m_k\colon [m]
   \rightarrow \naturals \}$ of nonincreasing functions such that for
   each $m \in \naturals$, $k \leq m$, and sequence $0 < i_1 < \cdots
   < i_k \leq m$, we have $f(\row ik) = \sum_{t=1}^k\alpha_k^m(i_t)$.
 \end{definition}

 The main difference between separable committee scoring rules and
 weakly separable ones is that weakly separable ones allow function
 $f$ to use a score vector that depends on $k$ (as in the case of
 Bloc).  

 Weakly separable committee scoring rules form an interesting class of
 multiwinner rules. For example, It is immediate to see that they are
 polynomial-time computable (provided that they are defined through a
 family of polynomial-time computable scoring function) and in
 Section~\ref{sec:monotonicity} we show that they have an interesting
 monotonicity property.
 
\begin{proposition}\label{pro:sep}
  If $\calR$ is a weakly separable committee scoring rule defined
  through a polynomial-time computable family of functions (in the
  sense of Definition~\ref{def:weakly}), then $\calR$ is polynomial-time
  computable.
\end{proposition}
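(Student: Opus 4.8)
The plan is to exploit the additive structure of weakly separable scoring functions to reduce winner determination to computing a single numeric score per candidate, exactly as in the argument behind Theorem~\ref{separable-kbest}, but now with a scoring vector that is allowed to depend on $k$. Fix an input: an election $E = (C,V)$ with $\|C\| = m$ and $\|V\| = n$, and a committee size $k \in [m]$; let $\{\alpha^m_k\}$ be the family of nonincreasing functions witnessing weak separability of $\calR$. The first step is the observation that, for every committee $S \subseteq C$ with $\|S\| = k$,
\[
\text{sc}_f(S) = \sum_{v \in V} f(\pos_v(S)) = \sum_{v \in V}\sum_{c \in S}\alpha^m_k(\pos_v(c)) = \sum_{c \in S}\sigma(c), \qquad\text{where}\quad \sigma(c) := \sum_{v \in V}\alpha^m_k(\pos_v(c)),
\]
the middle equality being just a reindexing of the sum over the elements of $S$. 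Since $m$ and $k$ are determined by the input, $\sigma$ is a well-defined function $C \to \naturals$, and by hypothesis $\alpha^m_k$ is polynomial-time computable (with polynomially bounded values), so all of $\sigma(c_1),\ldots,\sigma(c_m)$ can be obtained with $mn$ evaluations of $\alpha^m_k$ and $O(mn)$ additions, i.e. in polynomial time and polynomial space.

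The second step is to read off the structure of $\calR(E,k)$ from this identity. By the displayed equation, a $k$-element committee $S$ maximizes $\text{sc}_f$ over all $k$-subsets of $C$ if and only if $S$ consists of $k$ candidates of largest $\sigma$-value: ordering $C$ so that $\sigma(d_1)\ge\sigma(d_2)\ge\cdots\ge\sigma(d_m)$, and letting $j$ (resp.\ $j'$) be the number of candidates $c$ with $\sigma(c) > \sigma(d_k)$ (resp.\ $\sigma(c) \ge \sigma(d_k)$), the maximizers are exactly the committees $S$ with $\{d_1,\ldots,d_j\}\subseteq S\subseteq\{d_1,\ldots,d_{j'}\}$ and $\|S\|=k$. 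It remains to note that, under parallel-universes tie-breaking, a committee scoring rule returns precisely the set of $\text{sc}_f$-maximal committees: the evaluation of such a rule performs no sequential elimination or order-dependent iterative choices (unlike, say, STV), so the only ``ties'' that arise are among committees of equal, maximal score. Hence the set $\calR(E,k)$ is exactly the collection of maximizers described above.

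The proposition now follows in the two relevant senses. To output some winning committee, sort $C$ by $\sigma$ and return $\{d_1,\ldots,d_k\}$ (breaking ties among equal-$\sigma$ candidates arbitrarily); this takes polynomial time. To decide, given $W$ with $\|W\|=k$, whether $W \in \calR(E,k)$, compute $\text{sc}_f(W) = \sum_{c\in W}\sigma(c)$ together with the optimum value $\sum_{i=1}^{k}\sigma(d_i)$ and test equality, again in polynomial time.

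The only point requiring care is the one addressed in the second paragraph: ensuring that the broad parallel-universes tie-breaking does not smuggle in a combinatorial blow-up. I would stress that although $\calR(E,k)$ may itself be of exponential size when many candidates share the threshold value $\sigma(d_k)$, it admits the simple closed-form description above, which is all that is needed both to produce a winning committee and to test membership efficiently; everything else in the argument is routine bookkeeping.
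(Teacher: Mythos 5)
Your proof is correct and follows exactly the route the paper intends (the paper treats this as immediate from the same per-candidate decomposition it spells out for separable rules before Theorem~\ref{separable-kbest}): the committee score splits as $\sum_{c\in S}\sigma(c)$ with a $k$-dependent scoring vector, so winners are the top-$k$ candidates by $\sigma$ and both producing a winning committee and testing membership are polynomial-time. Your extra care about the tie structure and the possibly exponential size of $\calR(E,k)$ is a sensible clarification but does not change the argument.
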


On the other extreme, we have committee scoring rules defined through
functions $f(\row ik)$ whose values depend solely on $i_1$ (like the fourth rule in Example~\ref{ex1}). Such rules
seem to focus on voter representation. As in
$\ell_1$-CC, each voter is assigned to her most preferred
candidate in the selected committee, and only contributes
towards the score of this candidate. We call such committee scoring
rules \emph{representation-focused}.

It seems that representation-focused rules, in essence variants of the
Chamberlin--Courant rule, are $\np$-hard to
compute~\cite{pro-ros-zoh:j:proportional-representation,bou-lu:c:chamberlin-courant,bet-sli-uhl:j:mon-cc},
but there is at least one exception: We note that SNTV is both
separable and representation-focused and, in effect, is
polynomial-time computable. One could say that this is because of its
separability. While this indeed is a convincing explanation, there
also are non-separable committee scoring rules that are
polynomial-time computable, with Bloc being perhaps the simplest
example (though, Bloc is weakly separable).

\section{Axioms}\label{sec:axioms}
\noindent
We now put forward some properties (axioms) that multiwinner rules may
or may not satisfy. We use the standard axioms for single-winner rules
as our starting point, and augment them with ideas from the literature
that are specific to the multiwinner domain.
Due to our choice of focus, we do not include properties based on
the Condorcet principle, such as, e.g.,  the stability of Barber\`a and
Coelho~\cite{bar-coe:j:non-controversial-k-names}.
We stress that, since multiwinner rules have a very diverse range of applications,
our properties should not necessarily be understood in the normative
way:  %
the desirability of a
particular property can only be evaluated in the context of a specific
application.
Throughout this section, we write $\calR$ to denote some given
multiwinner rule.

\medskip

\begin{table*}
\begin{center}
\small
  \newcommand{\yes}{$\surd$}
  \newcommand{\no}{$\times$}
  \newcommand{\full}{full}
  \newcommand{\noncross}{NC}
  \newcommand{\cand}{C}
  \newcommand{\cnc}{C/NC}
  \begin{center}
  \tabcolsep=0.15cm
  \begin{tabular}{l|cccccccc}
    & Committee & Solid & Consensus && Monoto-& Homoge-& Consis- & \\
    Rule & 
    Monotonicity &
    Coalitions &
    Committee &
    Unanimity &
    nicity &
    neity &
    tency 
    \\
    \hline
    STV                 & \no  & \yes\ ($\diamond$) & \yes\ ($\diamond$) & strong &\no & \yes ($\heartsuit$)& \no\\[1pt]
    SNTV                & \yes & \yes & \yes & weak   & \cnc & \yes & \yes\\[1pt]
    Bloc                & \no  & \no  & \no  & fix maj. & \cnc & \yes & \yes\\[1pt]
    $k$-Borda           & \yes & \no  & \no  & strong & \cnc & \yes & \yes\\[1pt]
    $\ell_1$-CC         & \no  & \no  & \yes & weak   & \cand & \yes              & \yes\\[1pt]
    $\ell_{\min}$-CC     & \no  & \no  & \yes & weak   & \cand & \yes              & \no\\[1pt] 
    Greedy-CC           & \yes & \no  & \no  & weak   & \no  & \yes               & \no\\[1pt]
    $\ell_1$-Monroe     & \no  & \no  & \yes & strong & \no  & \yes\ ($\clubsuit$)  & \no\\[1pt]
    $\ell_{\min}$-Monroe & \no  & \no  & \yes & strong & \no  & \yes\ ($\clubsuit$)  & \no\\[1pt]
    Greedy-Monroe       & \no  & \yes & \yes & strong & \no  & \multicolumn{1}{c}{\yes\ ($\spadesuit$)}  & \no
  \end{tabular}
  \end{center}
  \caption{\label{tab:results}Summary of results.
 $\boldsymbol{\surd}$ and $\boldsymbol{\times}$ indicate that 
    the rule has/does not have the respective property.
    C means candidate monotonicity and NC means
    non-crossing monotonicity (C/NC means satisfying both conditions). 
    ``Fix maj.'' in the Unanimity column for Bloc means that not only
    Bloc is unanimous in the strong sense, but also that it satisfies
    the fixed majority property, which is stronger (none of the other
    rules satisfy it).
    The properties marked with 
    ($\diamond$) hold for STV when $n \geq k(k+1)$; property marked with 
    (${\heartsuit}$) requires STV to use non-rounded Droop quota and fractional votes. Properties 
    marked with (${\clubsuit}$) hold if ${n}$ is divisible by ${k}$ 
    and $({\spadesuit}$) in addition requires a specific intermediate tie-breaking rule.}

\end{center}
\end{table*}

Our first axiom is {\em nonimposition}. It requires that
each size-$k$ set of candidates can win. This is a basic
requirement that is trivially satisfied by all rules that we
consider.

\begin{description}[leftmargin=15pt]
\item[Nonimposition.] For each set of candidates $C$ and each
  $k$-element subset $W$ of $C$, there is an election $E = (C,V)$ such
  that $\calR(E,k) = \{W\}$.
\end{description}
The next three axioms---consistency, homogeneity, and monotonicity---are adapted 
from the single-winner setting. 
For the first two, the adaptation is straightforward.

\begin{description}[leftmargin=15pt]\fixlist

\item[Consistency.] For every pair of elections $E_1 = (C,V_1)$, $E_2 =
  (C,V_2)$ and each $k \in [\|C\|]$, if $\calR(E_1,k) \cap
  \calR(E_2,k) \neq \emptyset$ then $\calR(E_1+E_2,k) = \calR(E_1,k)
  \cap \calR(E_2,k)$.

\item[Homogeneity.] For each election $E = (C,V)$, each $k \in
  [\|C\|]$, and each $t \in \naturals$, %
  it holds that $\calR(tE,k) = \calR(E,k)$.
\end{description}
We now consider monotonicity. If $c$ belongs to a winning committee
$W$ then, generally speaking, we cannot expect $W$ to remain winning
when $c$ is moved forward in some vote.  For example, this shift may
hurt other members of $W$.  Indeed, none of our rules satisfies this
strict version of monotonicity. However, there are two natural
relaxations of this condition. One option is to require that after the
shift $c$ belongs to \emph{some} winning committee. Alternatively, we
may restrict forward movements of $c$, prohibiting it to overtake
other members of~$W$. (We point the reader to the work of Sanver and
Zwicker~\cite{san-zwi:j:monotonicity} for an extensive discussion of
monotonicity in the context of irresolute voting rules.)

\begin{description}[leftmargin=15pt]
\item[Monotonicity.] For each election $E = (C,V)$, each $c \in C$,
  and each $k \in [\|C\|]$, if $c\in W$ for some $W \in \calR(E,k)$,
  then for each $E'$ obtained from $E$ by
  shifting $c$ one position forward in some vote $v$ it holds that:
(1) for {\em candidate monotonicity}:  $c\in W'$ for some $W' \in \calR(E',k)$, and
(2) for {\em non-crossing monotonicity}: if $c$ was ranked immediately below some $b\not\in W$, then $W \in \calR(E',k)$.
\end{description}

Our next axiom, {\em committee monotonicity}, is specific to multiwinner elections,
as it deals with changing the size of the desired committee.  Intuitively, it
requires that when we increase the size of the target committee, none of
the already selected candidates should be dropped. 
Our phrasing is somewhat involved because $\calR$
returns sets of committees.

\begin{description}[leftmargin=15pt]
\item[Committee  Monotonicity.] %
  For each election $E = (C,V)$ the following conditions hold: (1) For each
  $k \in [m-1]$, if $W \in \calR(E,k)$ then there exists a $W' \in
  \calR(E,k+1)$ such that $W \subseteq W'$; (2) for each $k \in
  [m-1]$, if $W \in \calR(E,k+1)$ then there exists a $W' \in
  \calR(E,k)$ such that $W' \subseteq W$.
\end{description}
The second condition in the definition above is aimed to prevent the
following situation. Consider an election $E$ with candidate
set $C = \{a,b,c, \ldots\}$. %
Without condition~(2) a
committee-monotone rule $\calR$ would be allowed to output
the following winning committees: 
$\calR(E,1) = \{\{a\}\}$, $\calR(E,2) = \{\{a,b\}, \{b,c\}\}$,
and so on. Note that the committee $\{b, c\}$, %
which suddenly appears
in $\calR(E,2)$, %
breaks what we would intuitively think of as committee
monotonicity, but is not ruled out by condition (1) alone. 

Other authors already discussed committee monotonicity in the context
of multiwinner voting, but under different names and usually by
arguing that it is paradoxical that a given voting rule fails
committee
monotonicity~\cite{sta:j:paradoxes,rat:j:condorcet-inconsistencies}. In
Section~\ref{sec:com-mon} we discuss committee monotonicity in detail,
argue that failing it may be seen as a positive feature of a
multiwinner rule, and use it to axiomatize best-$k$ rules.

The final three axioms represent three implementations of Dummett's
condition known as {\em proportionality for solid
  coalitions}~\cite{dum:b:voting}.  Dummett's original proposal is as
follows: Consider an election with $n$ voters where the goal is to
pick $k$ candidates. If for some $\ell \in [k]$ there is a group of
$\frac{\ell n}{k}$ voters that all rank the same $\ell $ candidates on
top, these $\ell $ candidates should be in a winning committee. This
requirement, which tries to capture the idea of proportional
representation, seems to be quite strong. We are not aware of a single
rule that satisfies it.\footnote{There is also a variant of Dummett's
  condition known as \emph{Droop Proportionality Criterion}, which is
  geared toward STV~\cite{woo:j:properties}; STV can be shown to
  satisfy Dummett's condition whenever Droop quota is smaller than
  $\frac{n}{k}$.}  The following three axioms are weaker and reflect
the same idea.

\begin{description}[leftmargin=15pt]\fixlist
\item[Solid Coalitions.] For each election $E = (C,V)$ and each $k \in
  [\|C\|]$, if at least $\frac{\|V\|}{k}$ voters rank some candidate
  $c$ first then $c$ belongs to every committee in $\calR(E,k)$.

\item[Consensus Committee.] For each election $E = (C,V)$ and each $k
  \in [\|C\|]$, if there is a $k$-element set $W$, $W \subseteq C$,
  such that each voter ranks some member of $W$ first and each member
  of $W$ is ranked first by either $\lfloor \frac{\|V\|}{k} \rfloor$
  or $\lceil \frac{\|V\|}{k} \rceil$ voters then $\calR(E,k) = \{W\}$.

\item[Unanimity.] For each election $E =
  (C,V)$ and each  $k \in [\|C\|]$, if each voter ranks the same $k$ candidates
  $W$ on top (possibly in different order), then $\calR(E,k) =
  \{W\}$ (strong unanimity) or $W \in \calR(E,k)$ (weak unanimity).
\end{description}

We remind the reader that we list only the axioms that make sense for
preference-based rules which are in some broad sense close to scoring
rules. There are however preference-based axioms that are geared
towards Condorcet principle. The following axiom is an
example~\cite{deb:j:prudent}, though it can also be seen as a
generalization of the unanimity property.

\begin{description}
  \item[Fixed Majority.] For each election $E = (C,V)$ and each
  $k \in [\|C\|]$, if there is a $k$-element set $W$,
  $W \subseteq C$, such that a strict majority of voters rank all member of $W$  above all non-members of $W$, then $\calR(E,k) = \{W\}$.
\end{description}

Almost all the rules that we consider in this paper fail to satisfy
this axiom. For most of them, this already happens in the
single-winner setting, with $k=1$. Indeed, Plurality is the only
single-winner scoring protocol that guarantees that a candidate ranked
on top by a majority of the voters is the unique winner.  However,
quite interestingly, Bloc does satisfy the fixed majority
property.\footnote{ To see that this is the case, let $W$ be a
  committee such that a majority of voters rank all the members of $W$
  on top $k$ positions, and let $S$ be some other committee. It is easy
  to see that, under Bloc, $W$ has a higher score than $S$. Both
  committees receive the same score for the candidates in $W \cap S$,
  but the candidates in $W \setminus S$ receive strictly more points
  than the candidates in $S \setminus W$.} This is yet another reason
to view Bloc as a rule of a different kind than separable committee
scoring rules (even though at the level of formal definition Bloc is
very close to separable rules; see the discussion in
Section~\ref{sec:two-types}).

\section{Committee Monotonicity}\label{sec:committee-monotonicity}\label{sec:com-mon}
\noindent
The desirability of committee monotonicity depends strongly on the
application: if we are choosing finalists of a competition, then it is
imperative to use a rule that has this property, but in the context of
proportional representation requiring that the rule is committee
monotonic may prevent selecting a truly representative committee.
Indeed, this was already suggested by
Black~\cite{bla:b:polsci:committees-elections}: Consider a society
with single-peaked preferences regarding the left-right political
spectrum. If we are picking a single candidate (i.e., if $k=1$) then
it is most natural to select as centrist a candidate as possible
(formally, we would select a Condorcet winner). However, if we were to
select two candidates to represent the society (i.e., if $k=2$), then
selecting a ``moderate left-wing'' and a ``moderate right-wing''
candidate would, intuitively, give more proportional representation
than selecting the centrist candidate and an additional one (if the
additional candidate were to the left of the centrist one, the
right-wing voters would be neglected; if the candidate were to the
right of the centrist one, the left-wing voters would be neglected).

The above intuitions are further strengthened by the fact that
committee monotonicity axiomatically characterizes the class of
best-$k$ rules.

\begin{theorem}\label{thm:com-mon}
A $k$-committee selection rule is committee
  monotonic if and only if this rule is a best-$k$ rule.
\end{theorem}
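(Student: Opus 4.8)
The plan is to prove the two implications separately, with essentially all the work in one direction.

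\emph{Best-$k$ implies committee monotonic.} Suppose $\calR$ is the best-$k$ rule induced by a social preference function $F$, and fix an election $E=(C,V)$ with $\|C\|=m$. If $W\in\calR(E,k)$ for some $k\le m-1$, choose an order $\succ$ in $F(E)$ for which $W$ is the set of the $k$ top-ranked candidates; then the top $k+1$ candidates of $\succ$ form a set $W'\supseteq W$ selected by the same order, so $W'\in\calR(E,k+1)$, which is condition~(1) of committee monotonicity. Symmetrically, if $W\in\calR(E,k+1)$, the top $k$ candidates of a witnessing $\succ$ give a $W'\subseteq W$ in $\calR(E,k)$, which is condition~(2). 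This direction is immediate.

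\emph{Committee monotonic implies best-$k$.} Here I would construct a single social preference function $F$ that induces $\calR$ for all committee sizes at once. The key observation is that a strict linear order on $C$ is the same data as a maximal chain $S_1\subset S_2\subset\cdots\subset S_m=C$ with $\|S_j\|=j$ (read $S_j$ as the set of top $j$ candidates). So define $F(E)$ to be the set of those maximal chains all of whose members are winning committees, i.e., $S_j\in\calR(E,j)$ for every $j\in[m]$. Assuming, as usual, that $\calR$ always returns a nonempty committee set, we have $\calR(E,m)=\{C\}$, and iterating condition~(2) of committee monotonicity downward from $C$ produces at least one such chain; hence $F(E)\neq\emptyset$ and $F$ is a legitimate social preference function. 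To see that the best-$k$ rule $\calR_F$ induced by $F$ equals $\calR$, note that one containment is built in: any top-$k$ set of a chain in $F(E)$ is, by the definition of $F$, a member of $\calR(E,k)$. For the converse, given $W\in\calR(E,k)$, set $S_k=W$, apply condition~(1) for $j=k,\dots,m-1$ to obtain $S_{j+1}\in\calR(E,j+1)$ with $S_j\subset S_{j+1}$ (reaching $S_m=C$), and apply condition~(2) for $j=k,\dots,2$ to obtain $S_{j-1}\in\calR(E,j-1)$ with $S_{j-1}\subset S_j$. The resulting chain lies in $F(E)$ and has $W$ as its top-$k$ set, so $W\in\calR_F(E,k)$.

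No individual step is hard; the points to watch are that the constructed $F$ must serve every committee size simultaneously (which is precisely why we demand that the chains in $F(E)$ contain a winner of every size), that $F(E)$ must be nonempty (handled by the nonemptiness convention together with condition~(2)), and the correct pairing of the two committee-monotonicity conditions with the running index as one climbs and descends the chain. The single conceptual point is recognizing that committee monotonicity says exactly that the winning committees of all sizes can be threaded into maximal chains, and that maximal chains of subsets of $C$ are precisely the strict linear orders on $C$.
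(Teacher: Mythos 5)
Your proposal is correct and follows essentially the same route as the paper: the forward direction reads off prefixes of a witnessing order, and the converse direction defines $F(E)$ as the set of linear orders whose chain of top-$j$ prefixes consists entirely of winning committees, exactly the paper's construction. The only difference is that you spell out the verification (nonemptiness of $F(E)$ and the threading of a given $W\in\calR(E,k)$ into a full chain via conditions (1) and (2)) that the paper dismisses as "easy to verify."
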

\begin{proof}
  Let $\calR$ be a best-$k$ rule and let $F$ be the underlying social
  preference function. Consider an election $E = (C,V)$.  
  Pick $k\in [\|C\|-1]$ and   $W\in \calR(E,k)$. 
  By definition of a $k$-best rule, there
  is an order $\mathord{\succ}$ in $F(E)$ such that $w \succ c$ for each $w \in
  W$ and each $c \in C \setminus W$.
  Clearly, there is a candidate $w' \in C \setminus W$ such that for
  each $w \in W$ and each $c \in C\setminus (W\cup\{w'\})$ we have $w
  \succ w' \succ c$. Hence, $W \cup \{w'\} \in
  \calR(E,k+1)$. A similar argument shows that $\calR$ satisfies
  the second committee-monotonicity condition.
  Conversely, 
  assume that $\calR$ satisfies committee monotonicity.
  We will show that it is a best-$k$ rule by revealing the underlying
  social preference function $F$. Let $E = (C,V)$ be some election where
  $C = \{c_1, \ldots, c_m\}$. We define $F(E)$ to contain all linear
  orders $\mathord{\succ}$ that satisfy the following condition: If
  $\pi$ is a permutation of $[m]$ and $c_{\pi(1)} \succ
  c_{\pi(2)} \succ \cdots \succ c_{\pi(m)}$ then there is a sequence
  of sets $W_1 = \{c_{\pi(1)}\}, W_2 = \{c_{\pi(1)},c_{\pi(2)}\},
  \ldots, W_m = \{c_{\pi(1)}, \ldots, c_{\pi(m)}\}$ such that $W_1 \in
  \calR(E,1), W_2 \in \calR(E,2), \ldots, W_m \in \calR(E,m)$.  Using
  the two conditions from the definition of committee monotonicity, it
  is easy to verify that $F$ indeed defines $\calR$.
\end{proof}

Thus SNTV, $k$-Borda, and all  separable committee
scoring rules satisfy committee monotonicity.
On the other hand, Greedy-CC satisfies committee monotonicity and,
in effect, is a best-$k$ rule as well.

\begin{proposition}
  Greedy-CC satisfies committee monotonicity.
\end{proposition}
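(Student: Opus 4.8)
The plan is to exploit the fact that the definition of Greedy-CC is purely incremental: the choice made in the $i$-th iteration depends only on the set $W_{i-1}$ built so far, and not at all on the target committee size $k$. First I would make this precise in terms of parallel-universes tie-breaking. Call a sequence $(W_0, W_1, \ldots, W_i)$ a \emph{valid $i$-run} for $E = (C,V)$ if $W_0 = \emptyset$ and, for every $j \in [i]$, there is a candidate $c_j \in C \setminus W_{j-1}$ maximizing $\ell_1(\Phi^\cc(W_{j-1}\cup\{c_j\}))$ over all candidates in $C \setminus W_{j-1}$, with $W_j = W_{j-1} \cup \{c_j\}$. Directly from the description of Greedy-CC, a set $W$ is in $\calR(E,k)$ if and only if there is a valid $k$-run ending in $W_k = W$. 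I would then record two elementary facts: (i) every prefix $(W_0, \ldots, W_j)$ of a valid $i$-run is itself a valid $j$-run, since the maximization performed in step $j$ never refers to $i$; and (ii) every valid $i$-run with $i < m = \|C\|$ extends to a valid $(i+1)$-run, because $C \setminus W_i \neq \emptyset$ and hence the set of maximizers in step $i+1$ is nonempty.

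With this in place, condition~(1) of committee monotonicity follows immediately: take $k \in [m-1]$ and $W \in \calR(E,k)$, pick a valid $k$-run ending in $W_k = W$, apply fact~(ii) to extend it to a valid $(k+1)$-run ending in $W_{k+1} = W \cup \{c_{k+1}\}$, and observe that $W' := W_{k+1} \in \calR(E,k+1)$ with $W \subseteq W'$. Condition~(2) is equally short: take $k \in [m-1]$ and $W \in \calR(E,k+1)$, pick a valid $(k+1)$-run ending in $W_{k+1} = W$, and apply fact~(i) to its length-$k$ prefix to get a valid $k$-run ending in some $W_k$; then $W' := W_k \in \calR(E,k)$ and $W' = W_k \subseteq W_{k+1} = W$, as required.

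The only genuine point of care — which I would flag as the ``obstacle'', although it is a mild one — is the interplay with parallel-universes tie-breaking: one must check that the intermediate tie-breaking decisions available when constructing a size-$k$ committee are exactly those available when constructing the first $k$ members of a larger committee. This is what facts~(i) and~(ii) encapsulate, and it holds precisely because the objective optimized in each iteration, $\ell_1(\Phi^\cc(W_{i-1}\cup\{c_i\}))$, is independent of the target size; no further subtlety arises. Combining the above with Theorem~\ref{thm:com-mon} then also yields, as a corollary, that Greedy-CC is a best-$k$ rule.
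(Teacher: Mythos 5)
Your proof is correct and follows essentially the same route as the paper's: both rest on the observation that each greedy iteration depends only on the previously chosen set and not on the target size $k$, so a size-$k$ run is a prefix of a size-$(k+1)$ run and vice versa. Your formalization via valid $i$-runs just makes explicit the interaction with parallel-universes tie-breaking that the paper's shorter argument leaves implicit.
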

\begin{proof}
  Given an election $E$ and an integer $k$, Greedy-CC performs $k$
  iterations, in each picking one member of the committee.  However,
  the member of the committee picked in each iteration depends only on
  which members were chosen previously and not on $k$. Thus, after
  Greedy-CC performs $k$ iterations to compute committee of size $k$,
  one can simply perform one more iteration to obtain committee of
  size $k+1$. Conversely, if Greedy-CC is to compute a committee of
  size $k+1$, it first computes a committee of size $k$.  In effect,
  it is clear that Greedy-CC is committee monotone.
\end{proof}

For separable committee scoring rules, by
Theorem~\ref{separable-kbest} their underlying social welfare
functions are the scoring ones. This possibly distinguishes them from
rules such as Greedy-CC, whose underlying social preference functions
may not be based on scoring rules.

\begin{proposition}
  STV, Bloc, $\ell_1$-CC, $\ell_{\min}$-CC,
  $\ell_1$-Monroe, $\ell_{\min}$-Monroe, and Greedy-Monroe do not satisfy
  committee monotonicity.
\end{proposition}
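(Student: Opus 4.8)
The plan is to disprove committee monotonicity for each of the seven rules by exhibiting a small election on which one of the two conditions in the definition fails; by Theorem~\ref{thm:com-mon} this is equivalent to showing that none of these rules is a best-$k$ rule, but arguing directly is cleanest. Three families of counterexamples suffice: one shared profile for the five representation-style rules $\ell_1$-CC, $\ell_{\min}$-CC, $\ell_1$-Monroe, $\ell_{\min}$-Monroe and Greedy-Monroe, and one dedicated profile each for Bloc and for STV.

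For the five representation-style rules I would use a single ``two camps plus a compromise'' profile over $C=\{a,b,c,d\}$: four votes $\fourvote{a}{c}{b}{d}$, four votes $\fourvote{b}{c}{a}{d}$, one vote $\fourvote{c}{a}{b}{d}$ and one vote $\fourvote{c}{b}{a}{d}$ (ten voters, so $n$ is even, and $d$ is ranked last throughout). A short computation shows that $c$ is the \emph{unique} Borda winner and also the unique candidate whose worst position over all voters is smallest, so for $k=1$ each of these five rules --- which for a single-member committee reduces either to picking the Borda winner ($\ell_1$-CC, $\ell_1$-Monroe, Greedy-Monroe) or to picking the egalitarian-best candidate ($\ell_{\min}$-CC, $\ell_{\min}$-Monroe) --- outputs $\calR(E,1)=\{\{c\}\}$. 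For $k=2$, however, representing each camp by its own favourite is optimal, so $\{a,b\}\in\calR(E,2)$ (and one checks it is the \emph{unique} optimum for the three utilitarian rules, by comparing the six candidate committees; Monroe's balanced-assignment constraint is harmless here since $n$ is even and $a,b$ each get exactly $n/2$ voters). Since $\{c\}\not\subseteq\{a,b\}$, condition~(2) of committee monotonicity fails for all five; for $\ell_1$-CC, $\ell_1$-Monroe and Greedy-Monroe, where $\calR(E,2)=\{\{a,b\}\}$, already condition~(1) fails.

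For Bloc a separate profile is needed, because Bloc at $k=1$ is plurality rather than an averaging rule. Take $C=\{a,b,c,d\}$ with three votes $\fourvote{a}{b}{c}{d}$, two votes $\fourvote{b}{c}{d}{a}$ and two votes $\fourvote{c}{b}{d}{a}$: then $a$ is the unique plurality winner, so $\calR(E,1)=\{\{a\}\}$, but the $2$-approval scores are $7,4,3,0$ for $b,c,a,d$, so $\calR(E,2)=\{\{b,c\}\}$ and condition~(1) fails. For STV I would exploit the fact that the Droop quota shrinks as $k$ grows, so a candidate whose first-place support lies strictly between the quota for $k+1$ seats and the quota for $k$ seats is seated outright when $k+1$ seats are filled but must survive elimination when only $k$ seats are. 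Concretely, over $C=\{a,b,c\}$ take $24$ voters: $11$ casting $a\succ b\succ c$, $6$ casting $b\succ c\succ a$, $7$ casting $c\succ b\succ a$. For $k=1$ (Droop quota $13$) nobody meets the quota, $b$ is eliminated, its votes move to $c$, and $c$ reaches the quota, so $\calR(E,1)=\{\{c\}\}$; for $k=2$ (Droop quota $9$) $a$ meets the quota immediately and is seated, its two surplus ballots move to $b$, after which $c$ is the weaker of the two survivors, is eliminated, and $b$ takes the second seat, so $\calR(E,2)=\{\{a,b\}\}$. Again $\{c\}\not\subseteq\{a,b\}$, so committee monotonicity fails.

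The step I expect to require the most care is making the STV example watertight under parallel-universes tie-breaking: the vote counts must be chosen so that at every round the eliminated candidate (resp.\ the candidate meeting the quota) is unique, and so that the $k=1$ and $k=2$ runs genuinely diverge --- which forces $a$'s first-place count to exceed the $k=2$ quota by enough to swing the residual two-candidate contest, and is why roughly two dozen voters are needed rather than a handful (one should also fix the convention that the Droop quota is the standard quantity computed once from the initial electorate). A milder version of the same concern applies elsewhere: every separation in the shared representation-style profile (uniqueness of the Borda winner; uniqueness of the $k=2$ optimum for the utilitarian rules) must be strict, so that no incidental tie creates an ``escape'' committee satisfying the condition one is trying to violate; for the $\ell_{\min}$ variants several $k=2$ committees tie, so condition~(1) actually holds there and the refutation must go through condition~(2).
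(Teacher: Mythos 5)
Your proposal is correct and follows essentially the same strategy as the paper's proof: exhibit small concrete profiles on which the $k=1$ and $k=2$ outcomes are disjoint (a single shared profile for the CC/Monroe-style rules, and separate profiles for Bloc and for STV exploiting the shrinking Droop quota), and your arithmetic and tie-breaking caveats check out. The only difference is in the particular profiles chosen — the paper's four-voter example for the representation-style rules happens to make $\{a,b\}$ the unique $k=2$ winner even for the $\ell_{\min}$ variants, whereas yours requires falling back on condition (2) there, which you correctly flag.
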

\begin{proof}
  Let us first consider $\ell_1$-CC, $\ell_{\min}$-CC,
  $\ell_1$-Monroe, $\ell_{\min}$-Monroe, and Greedy-Monroe. Let $E =
  (C,V)$ be an election with $C = \{a,b,c,d\}$ and the following four
  votes: 
  \begin{align*}
     v_1\colon&  a\succ c\succ b\succ d, 
     &v_2\colon&  b\succ c\succ a\succ d,\\
     v_3\colon&  a\succ c\succ d\succ b,
     &v_4\colon& b\succ c\succ d\succ a.
  \end{align*}  
  It is easy to see that for $k=1$ the unique winning committee is
  $\{c\}$ but for $k=2$ the unique winning committee is $\{a,b\}$.

  The case of Bloc was resolved by Staring~\cite{sta:j:paradoxes}, but
  we include a simple argument for the sake of completeness.  We use
  election $E = (C,V)$ with $C = \{a,b,c\}$ and four votes:
  \newcommand{\vote}[3]{#1 \succ #2 \succ #3}
  \begin{align*}
     v_1\colon&  \vote abc, 
     &v_2\colon& \vote bca,\\
     v_3\colon&  \vote acb,
     &v_4\colon& \vote cba.
  \end{align*}  
  For $k=1$ Bloc has unique winning committee $\{a\}$, but with $k=2$ the\
  unique winning committee is $\{b,c\}$.

  Our example for STV is a bit more involved. Consider $E = (C,V)$
  with $C = \{a,b,c,d\}$ and with $24$ voters. We have:
  \begin{enumerate}\fixlist
  \item[(1)] $11$ voters with preference order $\fourvote abcd$, 
  \item[(2)] $3$ voters  with preference order $\fourvote bcad$, 
  \item[(3)] $4$ voters  with preference order $\fourvote cdab$, and
  \item[(4)] $6$ voters  with preference order $\fourvote dcab$.
  \end{enumerate}
  For $k=1$, the Droop quota is $\lfloor \frac{24}{2}\rfloor +1 = 13$. In
  the first round no candidate meets the quota and so STV eliminates
  the candidate with the lowest plurality score, that is, $b$. In the
  next round still no candidate meets the quota and STV eliminates
  $d$. In the resulting profile $c$ has plurality score $13$ and is the unique winner
  for $k=1$.

  For $k=2$, the Droop quota is $\lfloor \frac{24}{3}\rfloor+1 = 9$. Thus
  in the first round STV picks $a$ and removes it from the election
  together with $9$ voters that rank it first. The remaining two
  voters that supported $a$ transfer their votes to $b$, who now has
  plurality score $5$. Thus in the next round no candidate meets the
  quota and STV eliminates $c$. In the next round $d$ has plurality
  score $10$ and is selected. The unique winning committee is
  $\{a,d\}$.
\end{proof}

\begin{corollary}
STV, Bloc, $\ell_1$-CC, $\ell_{\min}$-CC, $\ell_1$-Monroe,
$\ell_{\min}$-Monroe, and Greedy-Monroe are not best-$k$ rules.
\end{corollary}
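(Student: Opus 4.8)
The plan is to derive this directly from the two preceding results. First I would invoke Theorem~\ref{thm:com-mon}, which asserts that a $k$-committee selection rule is committee monotonic if and only if it is a best-$k$ rule; the relevant direction here is its contrapositive form, namely that any $k$-committee selection rule failing committee monotonicity cannot be a best-$k$ rule. Second, I would appeal to the Proposition established immediately above, which exhibits explicit elections witnessing that each of STV, Bloc, $\ell_1$-CC, $\ell_{\min}$-CC, $\ell_1$-Monroe, $\ell_{\min}$-Monroe, and Greedy-Monroe violates committee monotonicity (indeed each violates condition (1) of the definition, since enlarging the target committee from $1$ to $2$ drops the unique size-$1$ winner). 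Combining the two, none of these rules can be a best-$k$ rule.

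The only small point to verify is that the hypothesis of Theorem~\ref{thm:com-mon} applies: each of the listed rules is a $k$-committee selection rule in the sense of Section~\ref{sec:prelim}, since by their definitions they return sets of $k$-element subsets of the candidate set. This is immediate and requires no further argument. There is essentially no obstacle here — all the real work was done in constructing the counterexamples in the Proposition, and this corollary is a formal consequence. If a self-contained argument were preferred, one could instead argue contrapositively without citing Theorem~\ref{thm:com-mon} by hand: if, say, $\ell_1$-CC were a best-$k$ rule with underlying social preference function $F$, then on the four-vote election of the Proposition some order of $F(E)$ would have to place $c$ first (since $\{c\}$ is the unique winner for $k=1$) yet also place both $a$ and $b$ above $c$ (since $\{a,b\}$ is the unique winner for $k=2$), a contradiction; analogous contradictions arise for the other rules from their respective counterexamples. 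But invoking Theorem~\ref{thm:com-mon} is cleaner, so that is the route I would take.
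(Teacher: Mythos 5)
Your proposal is correct and matches the paper's (implicit) argument exactly: the corollary is the contrapositive of Theorem~\ref{thm:com-mon} applied to the counterexamples of the preceding Proposition. Nothing is missing.
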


\section{Dummett's Proportionality}\label{sec:fairness}
\noindent
Properties in the spirit of Dummett's proportionality condition (with the
exception of unanimity) are geared toward rules that aim to achieve
proportional representation of the voters. Thus, in this section, we
judge multiwinner rules from this perspective.

We start by considering the solid coalitions
property.  It is easy to see that it is satisfied by both SNTV and STV. 

\begin{theorem}
  SNTV has the solid coalitions property. STV also has it for each election with $n \geq k(k+1)$, where $n$ is the number of voters and $k$ is the size of the winning committee.
\end{theorem}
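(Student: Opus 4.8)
The plan is to treat SNTV and STV separately; in both cases the claim reduces to elementary counting with plurality scores.

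\emph{SNTV.} Since SNTV returns the $k$ candidates with the highest plurality scores, under parallel-universes tie-breaking it is easy to see that a candidate $c$ lies in every winning committee precisely when at most $k$ candidates have plurality score at least that of $c$. So suppose $c$ is ranked first by at least $n/k$ voters but some winning committee omits $c$. Then there are $k+1$ candidates whose plurality scores are all at least $n/k$, so the plurality scores sum to at least $(k+1)\cdot n/k > n$, contradicting the fact that plurality scores sum to exactly $n$. Hence $c$ is in every winning committee. This part is routine.

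\emph{STV.} This is the substantive case, and the main obstacle, because even though $c$ meets the Droop quota immediately, STV need not select $c$ first. Write $q = \lfloor n/(k+1)\rfloor + 1$ for the (fixed) quota. I would first show that $c$ stays above the quota until it is selected: since $n \ge k(k+1)$ we have $q \le n/(k+1)+1 \le n/k$, so $c$'s initial plurality score is at least $n/k \ge q$; and neither selecting nor eliminating a candidate $d \neq c$ can decrease $c$'s plurality score — the $q$ ballots deleted in a selection step all have $d\neq c$ on top, and removing $d$ from the surviving ballots can only promote $c$ — so $c$'s plurality score never drops below $q$ while $c$ is unselected. In particular $c$ is never eliminated (an elimination step occurs only when every candidate is below the quota), so every round is a selection round up until $c$ is chosen.

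The second, and key, step is to bound the number of selection rounds that can pick a candidate other than $c$. The (at least $q$) voters who originally rank $c$ first are never deleted and always keep $c$ on top, so after $r$ selections of candidates other than $c$ there remain $n-rq$ voters, at least $q$ of whom rank $c$ first, leaving at most $n-(r+1)q$ who rank some other candidate first. An $(r+1)$-st non-$c$ selection therefore requires $n-(r+1)q \ge q$, i.e.\ $(r+2)q \le n$; but $q > n/(k+1)$ gives $(k+1)q > n$, which forces $r \le k-1$. Since STV terminates after exactly $k$ selections and, by the previous step, performs no eliminations while $c$ is unselected, $c$ must be selected within the first $k$ rounds, so $c$ belongs to every winning committee. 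In writing this up I would be careful about exactly two things: the monotonicity claim that transfers and deletions never hurt $c$ (a short case analysis on what happens to a ballot when another candidate is removed), and the bookkeeping that $c$'s original $\ge q$ ballots are genuinely disjoint from, and untouched by, the ballots deleted in the non-$c$ selection rounds; everything else is the arithmetic $n \ge k(k+1)\Rightarrow q \le n/k$ together with $(k+1)q > n$.
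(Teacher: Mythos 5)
Your proposal is correct and follows essentially the same route as the paper: the same counting argument for SNTV, and for STV the same key observation that $n \ge k(k+1)$ forces the Droop quota $q=\lfloor n/(k+1)\rfloor+1$ to be at most $n/k$. The paper's STV argument stops at ``$c$ meets the quota, hence is included,'' whereas you correctly identify and fill the implicit gap --- showing $c$ is never eliminated and that at most $k-1$ non-$c$ selections can occur since $(k+1)q>n$ --- which is a welcome completion of the same argument rather than a different one.
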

\begin{proof}
  For SNTV it suffices to note that if there are $n$ voters and some
  candidate $c$ is ranked first by $\frac{n}{k}$  voters,
  then---by a simple counting argument---there cannot be $k$ other
  candidates  each of whom is ranked first by at least $\frac{n}{k}$
  voters. Thus $c$ must be included in each winning committee.

  For STV, %
  the Droop quota has value $\lfloor
  \frac{n}{k+1}\rfloor+1$ and, if $n \geq k(k+1)$, then it is smaller
  or equal to $\frac{n}{k}$. Thus if there is a candidate $c$ who is
  ranked first by $\frac{n}{k}$ candidates then this candidate will be
  included in the winning committee. %
\end{proof}
On the other hand, even though this property seems
to be very much in spirit of the Monroe and Chamberlin--Courant rules,
$\ell_1$-Monroe, $\ell_1$-CC, $\ell_{\min}$-Monroe, and
$\ell_{\min}$-CC fail to satisfy it. %
Yet, it is satisfied by Greedy-Monroe.

\begin{theorem}
  $\ell_1$-CC, $\ell_{\min}$-CC, $\ell_1$-Monroe, and $\ell_{\min}$-Monroe
  do not have the solid coalitions property, but Greedy-Monroe does have it.%
\end{theorem}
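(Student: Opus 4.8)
The plan is to treat the two halves of the statement separately: a single counterexample handles all four Chamberlin--Courant/Monroe rules, and a counting argument handles Greedy-Monroe.

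For the negative half I would exhibit one election that simultaneously defeats all four rules. Take $C=\{a,b,c,d\}$, $k=2$, and the four votes $v_1\colon c\succ a\succ b\succ d$, $v_2\colon c\succ b\succ a\succ d$, $v_3\colon a\succ d\succ b\succ c$, $v_4\colon b\succ d\succ a\succ c$. Here $c$ is ranked first by exactly $\frac{\|V\|}{k}=2$ voters, so the solid coalitions property demands that $c$ lie in every winning committee. The remaining work is a direct (if slightly tedious) evaluation of all $\binom{4}{2}=6$ size-$2$ committees with the Borda satisfaction function: one checks that $\{a,b\},\{a,c\},\{b,c\},\{c,d\}$ all attain the utilitarian optimum (score $10$) under $\ell_1$-CC and under $\ell_1$-Monroe (with the balanced $2$--$2$ assignments), while $\{a,b\}$ and $\{c,d\}$ attain the egalitarian optimum (worst-voter satisfaction $2$) under $\ell_{\min}$-CC and $\ell_{\min}$-Monroe. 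In each case $\{a,b\}$ is a winning committee that omits $c$, so the property fails. The $a\leftrightarrow b$ / $v_1\leftrightarrow v_2$, $v_3\leftrightarrow v_4$ symmetry of the profile roughly halves the casework, and one should note that the $\|\Phi(V)\|<k$ fill-in clause of CC never triggers here since the relevant optimal assignments already use two distinct candidates.

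For Greedy-Monroe I would show that $c$ is selected in every run, no matter how intermediate ties are broken. Write $n=\|V\|$, $m=\|C\|$, and let $t\ge\lceil n/k\rceil$ be the (integer) number of voters ranking $c$ first. Recall that Greedy-Monroe runs $k$ iterations with group sizes $n_1,\dots,n_k\in\{\lfloor n/k\rfloor,\lceil n/k\rceil\}$ satisfying $\sum_i n_i=n$, and that in iteration $i$ it picks an available candidate $c_i$ and a size-$n_i$ set $V_i'$ of still-unassigned voters maximizing $\sum_{v\in V_i'}(m-\pos_v(c_i))$. The key elementary fact is that this quantity is at most $n_i(m-1)$, with equality if and only if every voter in $V_i'$ ranks $c_i$ first; and since $n_i\le\lceil n/k\rceil\le t$, candidate $c$ paired with any $n_i$ of its first-place voters attains $n_i(m-1)$ as long as those voters remain available. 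Now suppose for contradiction that some run outputs $W=\{c_1,\dots,c_k\}$ with $c\notin W$. By induction on $i$, all $t$ first-place voters of $c$ are available at the start of iteration $i$: granting this, $c$ attains the global maximum $n_i(m-1)$ in iteration $i$, so the chosen pair $(c_i,V_i')$ must also attain it, which forces $V_i'$ to consist of $n_i$ voters ranking $c_i$ first; since $c_i\neq c$, none of them ranks $c$ first, so $c$'s first-place voters stay untouched and the induction continues. Consequently each $c_i$ has at least $n_i$ first-place voters, the first-place voter sets of $c,c_1,\dots,c_k$ are pairwise disjoint (no voter ranks two candidates first), and summing their sizes gives at least $\lceil n/k\rceil+\sum_i n_i=\lceil n/k\rceil+n>n$, impossible with only $n$ voters. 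Hence $c\in W$ for every $W\in\calR(E,k)$.

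The only real obstacle lies on the Greedy-Monroe side, and it is precisely the interaction with parallel-universes tie-breaking: a priori the algorithm could keep ``spending'' iterations on other candidates that tie $c$ for the maximal per-iteration Borda score, postponing $c$ until enough of its first-place supporters have been consumed that its selection is no longer forced. The counting argument above is exactly what rules this out — a run that never picks $c$ would have to commit $k$ further candidates whose first-place supporters, together with $c$'s, over-count the electorate. Everything else (the CC/Monroe evaluations, and the observation that the exhibited ties are genuine under parallel-universes tie-breaking) is routine verification.
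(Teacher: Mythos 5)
Your proposal is correct and follows essentially the same route as the paper: a single profile in which a committee omitting the popular candidate still ties for the optimum under all four CC/Monroe variants, plus an argument that Greedy-Monroe must select any candidate ranked first by at least $\lceil n/k\rceil$ voters. Your counterexample ($4$ candidates, $4$ voters, $k=2$) checks out for both the utilitarian and egalitarian variants, and your counting argument for Greedy-Monroe is actually more careful than the paper's two-line justification, since it explicitly rules out runs that postpone $c$ under parallel-universes tie-breaking.
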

\begin{proof}
Let us consider an election with candidate set $C =
  \{a,b,c,d,e,f,g\}$ and nine voters whose preference orders are 
  \newcommand{\sixvote}[7]{#1 \succ #2 \succ #3 \succ #4 \succ #5 \succ #6 \succ #7}
  \begin{align*}
    v_1 \colon& \sixvote aedfgbc,\\
    v_2 \colon& \sixvote bfdegac,\\
    v_3 \colon& \sixvote cgdefab,\\[7pt]
    v_4 \colon& \sixvote aedfgbc,\\
    v_5 \colon& \sixvote bfdegac,\\
    v_6 \colon& \sixvote cgdefab,\\[7pt]
    v_7 \colon& \sixvote daefgbc,\\
    v_8 \colon& \sixvote dbefgac,\\
    v_9 \colon& \sixvote dcefgab.
  \end{align*}
  It can be easily verified that none of the four versions of the
  Chamberlin-Courant and Monroe rules elect a committee of size~$3$
  that contains $d$, even though this would be required by the solid
  coalitions property. Indeed, for the committee $\{a,b,c\}$ the total
  satisfaction is $9\|C\|-12$ in the utilitarian version and
  $\|C\|-1$ in the egalitarian one. However if $d$ is on the
  committee, then at most two of the candidates among $a,b,c,e, f, g$
  are also in the committee and so the total satisfaction is,
  respectively, at most $9\|C\|-13$ and
  $\|C\|-2$. %

  For the second part of the theorem, we consider Greedy-Monroe. Take
  some election with $n$ voters, where we seek a committee of size
  $k$.  Suppose that some candidate $c$ is ranked first by at least
  $\frac{n}{k}$ voters. Greedy-Monroe starts by picking candidates
  ranked first by at least $\frac{n}{k}$ voters.  By the time it
  considers $c$, each of the voters that rank $c$ first remains
  unassigned, so it picks $c$.
\end{proof}

We believe that the solid coalitions property is desirable,  but not crucial for applications that require proportional representation (e.g., parliamentary elections). 
In contrast, the consensus committee property, which
we discuss next, seems to be fundamental. Indeed, it 
is satisfied by almost all rules that %
aim to achieve proportional representation.

When $k$ is a divisor of $n$, the consensus committee property is satisfied
by every rule that has the solid coalitions property. %
In particular, it is satisfied by SNTV, STV (if there are
sufficiently many voters) and Greedy-Monroe. It is also 
satisfied by $\ell_1$-CC, $\ell_{\min}$-CC, $\ell_1$-Monroe, and
$\ell_{\min}$-Monroe, but, interestingly, not by Greedy-CC.
This reveals a major deficiency of the latter rule: It makes decisions
regarding the inclusion of some candidate $c$ into the committee based
on the preferences of the voters to whom $c$ would not
be assigned. This is very problematic for a rule that seeks to
approximate $\ell_1$-CC; interestingly, the feature of Greedy-CC that
causes this is also responsible for the rule being committee monotone.

\begin{proposition}\label{prop-no-consensus-committee}
  Bloc, $k$-Borda and Greedy-CC do not have the consensus committee
  property (nor the solid coalitions property).
\end{proposition}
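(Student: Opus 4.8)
The plan is to refute both properties for all three rules at once by exhibiting a single tiny election. Take $C=\{a,b,c,d\}$, target committee size $k=2$, and the two voters $v_1\colon \fourvote{a}{c}{d}{b}$ and $v_2\colon \fourvote{b}{c}{d}{a}$. Here $n=2$, so $\lfloor n/k\rfloor=\lceil n/k\rceil=1$, and the set $W=\{a,b\}$ meets the hypothesis of the consensus committee property: each voter ranks a member of $W$ first, and each of $a,b$ is ranked first by exactly one voter. Hence the consensus committee property would force $\calR(E,2)=\{\{a,b\}\}$; and since $a$ and $b$ are each ranked first by $1\ge n/k$ voters, the solid coalitions property would force both $a$ and $b$ into every committee in $\calR(E,2)$. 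So it suffices to compute what each rule actually outputs and observe that neither requirement holds.

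The key verification steps, in order, are: (i) For Bloc with $k=2$ the $2$-approval scores are $a\colon 1$, $b\colon 1$, $c\colon 2$, $d\colon 0$, so every winning committee is $\{c\}$ together with one of $a,b$, i.e.\ $\calR(E,2)=\{\{a,c\},\{b,c\}\}$. (ii) For $k$-Borda the Borda scores are $a\colon 3$, $b\colon 3$, $c\colon 4$, $d\colon 2$, so again every winning committee pairs $c$ with one of $a,b$. (iii) For Greedy-CC the first iteration is forced to pick $c_1=c$, since $c$ is the unique Borda winner; in the second iteration one computes $\ell_1(\Phi^\cc(\{c,a\}))=\ell_1(\Phi^\cc(\{c,b\}))=5$ and $\ell_1(\Phi^\cc(\{c,d\}))=4$, so the rule picks $a$ or $b$ and outputs a committee in $\{\{a,c\},\{b,c\}\}$. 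In every case $\{a,b\}\notin\calR(E,2)$, so the consensus committee property fails, and in every case the output committee omits at least one of $a,b$, so the solid coalitions property fails. (One can alternatively deduce the solid coalitions failure abstractly: $k\mid n$ in this example, and it was noted above that solid coalitions implies the consensus committee property whenever $k\mid n$.)

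The only step that needs a little care is Greedy-CC, because its output is produced by a greedy procedure under parallel-universes tie-breaking rather than by a single score comparison. One must argue that the forced first choice $c_1=c$ (which involves no tie-breaking, as $c$ is the strict Borda maximizer) leads, no matter how the remaining ties are resolved, to a $2$-committee that differs from $\{a,b\}$ and drops one of $a,b$; this is exactly the short $\ell_1(\Phi^\cc(\cdot))$ computation above. For Bloc and $k$-Borda the verification is a routine scoring computation with nothing subtle, so I expect Greedy-CC to be the main (and only mild) obstacle.
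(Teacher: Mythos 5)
Your proposal is correct and uses exactly the same two-voter, four-candidate election as the paper's proof ($a \succ c \succ d \succ b$ and $b \succ c \succ d \succ a$ with $k=2$), with the same conclusion that each rule forces $c$ into every winning committee; you merely spell out the score computations that the paper leaves as "easy to verify." No substantive difference in approach.
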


\begin{proof}
  Consider an election with $C = \{a,b,c,d\}$ and two
  voters with preference orders $b \succ c \succ d \succ a$ and $a
  \succ c \succ d \succ b$. We seek a committee of size $k = 2$.  Then the consensus committee is $\{a,b\}$ but each
  of these rules  includes $c$ in each winning
  committee and thus fails the consensus committee property.
\end{proof}

For SNTV, $\ell_1$-CC, $k$-Borda and Bloc, the above
results can also be seen as incarnations of the following two more
general results regarding committee scoring rules.

\begin{proposition}
  Let $\calR$ be a separable committee scoring rule, let $k < m$, and
  let $f(i_1, \ldots, i_k) = \sum_{t=1}^k\gamma(t)$ be the respective
  committee scoring function. Then $\calR$ fails the consensus
  committee property if $0 < \gamma(1) \leq k\gamma(2)$, but satisfies
  it if $\gamma(1) > k\gamma(2)$ and there are sufficiently many
  voters.
\end{proposition}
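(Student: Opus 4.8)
The plan is to reduce everything to $\mathbf{s}$-scores via Theorem~\ref{separable-kbest}. Writing $\mathbf{s}=(\gamma(1),\ldots,\gamma(m))$ and $\mathrm{sc}(c)=\sum_{v\in V}\gamma(\pos_v(c))$, that theorem says $\calR$ is the best-$k$ rule induced by the scoring social welfare function with vector $\mathbf{s}$, so a committee $W$ is winning iff $\mathrm{sc}(w)\ge\mathrm{sc}(c)$ for all $w\in W$ and $c\in C\setminus W$, and $\calR(E,k)=\{W\}$ iff moreover $\mathrm{sc}(w)>\mathrm{sc}(c)$ for all such $w,c$ (otherwise $W$ ties with a non-member at its weakest seat, and an order consistent with the $\mathbf{s}$-scores that places that non-member above the weakest member yields a second winning committee). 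I will use the normalization $\gamma(m)=0$, which is standard for scoring vectors and holds for every rule in Example~\ref{ex1}, and note that $\gamma\ge 0$. It is also useful to record that in a consensus committee $W$ every voter's top candidate lies in $W$, and each $w\in W$ is ranked first by some $n_w\in\{\lfloor n/k\rfloor,\lceil n/k\rceil\}$ voters with $\sum_{w\in W}n_w=n$, hence $n_w\ge\lfloor n/k\rfloor$.

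For the ``satisfies'' half: given a consensus committee $W$, each $w\in W$ has $\mathrm{sc}(w)\ge n_w\gamma(1)\ge\lfloor n/k\rfloor\gamma(1)$ (using $\gamma\ge 0$), while each $c\notin W$ is never ranked first, so $\mathrm{sc}(c)\le n\gamma(2)$. Using $\lfloor n/k\rfloor\ge(n-k+1)/k$ together with $\gamma(1)>k\gamma(2)\ge 0$, one checks $\lfloor n/k\rfloor\gamma(1)>n\gamma(2)$ whenever $n>(k-1)\gamma(1)/(\gamma(1)-k\gamma(2))$; for such $n$ every member of $W$ strictly beats every non-member, so $\calR(E,k)=\{W\}$. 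This also makes ``sufficiently many voters'' explicit.

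For the ``fails'' half I would exhibit one bad instance. Take candidates $w_1,\ldots,w_k,c,d_1,\ldots,d_{m-k-1}$ and $n=k$ voters: for $j<k$, voter $v_j$ ranks $w_j$ first, then $c$, then $d_1,\ldots,d_{m-k-1}$, then $W\setminus\{w_j,w_k\}$ in any order, then $w_k$ last; voter $v_k$ ranks $w_k$ first, then $c$, then $d_1,\ldots,d_{m-k-1}$, then $w_1,\ldots,w_{k-1}$. Then $W=\{w_1,\ldots,w_k\}$ is a consensus committee: each member tops exactly one voter and every top candidate lies in $W$. A direct count gives $\mathrm{sc}(c)=k\gamma(2)$, $\mathrm{sc}(w_k)=\gamma(1)+(k-1)\gamma(m)=\gamma(1)$, and $\mathrm{sc}(w_j)\ge\gamma(1)$ for $j<k$. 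Since $\gamma(1)\le k\gamma(2)$, all of $c,w_1,\ldots,w_{k-1}$ have $\mathbf{s}$-score at least $\gamma(1)=\mathrm{sc}(w_k)$, so some order consistent with the $\mathbf{s}$-scores ranks all $k$ of them above $w_k$; its top $k$ omits $w_k$ and hence differs from $W$, so $\calR(E,k)\neq\{W\}$ and the property fails. (Replicating each voter $t$ times extends this to $n=tk$.)

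The substantive difficulty is confined to the ``fails'' half: a committee member is handed $\lfloor n/k\rfloor\gamma(1)$ by its first-place votes but picks up extra ``unearned'' points whenever the remaining voters rank it above last, and those points can keep $W$ strictly on top even when $\gamma(1)\le k\gamma(2)$; the construction neutralizes this by burying $w_k$ at the bottom of every voter who does not top it, which is exactly where $\gamma(m)=0$ is used. Without that normalization the threshold must be read as $\gamma(1)-\gamma(m)\le k(\gamma(2)-\gamma(m))$, since adding a constant to $\gamma$ changes neither $\calR$ nor the consensus committee property but does move the raw comparison of $\gamma(1)$ with $k\gamma(2)$ --- for instance $\mathbf{s}=(2,1,1)$ with $k=2$ is just SNTV, which satisfies the property although $\gamma(1)=k\gamma(2)$. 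Everything else is routine verification.
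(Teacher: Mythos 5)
Your proof is correct and follows essentially the same route as the paper's: both halves reduce to $\mathbf{s}$-scores via Theorem~\ref{separable-kbest}, your failure instance (the spoiler ranked second by every voter, one committee member buried last by all voters who do not rank it first) is a minor variant of the paper's $(k+1)$-voter profile, and the positive half is the same $\lfloor n/k\rfloor\gamma(1)$ versus $n\gamma(2)$ comparison. Your remark that the threshold $\gamma(1)\le k\gamma(2)$ is only meaningful under the normalization $\gamma(m)=0$ (otherwise $(2,1,1)$ with $k=2$ is just SNTV and refutes the literal statement) is a fair catch: the paper's own computation of $c_1$'s score as $s_1$ silently assumes that same normalization.
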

\begin{proof}
We first
  consider the case when $0 < \gamma(1) \leq k\gamma(2)$. Consider an
  election $E = (C,V)$ with $m$ candidates and %
  $k+1$ voters. Let $c_1, \ldots, c_k,x$ be some $k+1$ candidates from
  $C$. We form the preference orders of the voters so that each $c_i$,
  $1 \leq i \leq k-1$, is ranked first by exactly one voter, $c_k$ is
  ranked first by two voters, the voters that do not rank $c_1$ first
  rank $c_1$ last, and each voter ranks $x$ second. Aside from that,
  the preference orders are arbitrary. If $\calR$ had the consensus
  committee property, then $\{c_1, \ldots, c_k\}$ would be the unique
  winning committee.
  Since $\calR$ by Theorem~\ref{separable-kbest} is a best-$k$ rule for scoring vector ${\bf s}=(\row sm)$, with $m=\|C\|$, where 
  $s_i=\gamma(i)$, it suffices to show that $c_1$ gets lower score than $x$
  with respect to ${\bf s}$. Indeed, the score of $x$ is
  $(k+1)s_2$ and the score of $c_1$ is $s_1$. Since we
  assume that $0 < \gamma(1) \leq k\gamma(2)$, we have $s_1\le ks_2$. Moreover, since each
  $c_2, \ldots, c_k$ has ${\bf s}$-score equal or higher than $c_1$, we
  have that $\{c_1, \ldots, c_k\}$ is not a winning committee. Thus
  $\calR$ does not have the consensus committee property.

  Suppose now that $\gamma(1) > k\gamma(2)$. Consider an arbitrary
  election $E = (C,V)$, where there is a group $W$
  of $k$ candidates such that each voter ranks some member of $W$
  first and each member of $W$ is ranked first by either $\lfloor
  \frac{n}{k} \rfloor$ or $\lceil \frac{n}{k} \rceil$ voters. 
  Consider a candidate $x\notin W$. Its ${\bf s}$-score (where
  $\mathbf{s}$ is defined as in the previous paragraph) is at most
  $ns_2$. On the other hand, the score of each candidate in $W$ is at
  least $\lfloor \frac{n}{k}\rfloor s_1$. By assumption,
  $\lfloor\frac{n}{k}\rfloor s_1 > ns_2$ for sufficiently large $n$
  and, thus, $\calR$ satisfies the consensus committee property for
  such $n$.
\end{proof}

\begin{proposition}
  Let $\calR$ be a representation-focused $k$-committee scoring rule
  with committee scoring function  $f(i_1, \ldots, i_k) = \beta(i_1)$.
  Then $\calR$ has the consensus committee property 
  if and only if $\gamma(1) > \gamma(2)$.
\end{proposition}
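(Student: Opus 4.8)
The plan is to use the fact that for a representation-focused rule with $f(i_1,\dots,i_k)=\beta(i_1)$ the score of a committee $S$ equals $\mathrm{sc}_f(S)=\sum_{v\in V}\beta(\pos_v(\Phi^\cc(S)(v)))$: each voter contributes $\beta$ evaluated at the position of her favourite member of $S$. If $W$ is a consensus committee in $E=(C,V)$ with $n=\|V\|$ voters, then every voter ranks some member of $W$ first, so each voter contributes exactly $\beta(1)$ and $\mathrm{sc}_f(W)=n\beta(1)$. Since $\beta$ is nonincreasing, $n\beta(1)$ is the maximum score attainable by any $k$-committee, so $W$ is always a winning committee; the only question is whether it is the \emph{unique} winner, and this is precisely where the comparison of $\beta(1)$ with $\beta(2)$ enters (as $\beta$ is nonincreasing, the failure of $\beta(1)>\beta(2)$ means $\beta(1)=\beta(2)$).

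For the ``if'' direction, suppose $\beta(1)>\beta(2)$ and consider any consensus committee $W$ and any other $k$-committee $S\ne W$. As $|W|=|S|=k$, there is a candidate $w\in W\setminus S$; since $W$ is a consensus committee, $w$ is ranked first by at least $\lfloor n/k\rfloor\ge 1$ voters (this is where one uses the natural assumption $n\ge k$), so pick such a voter $v^\star$. Then $v^\star$ ranks $w$ first and $w\notin S$, hence $v^\star$'s favourite member of $S$ is in position at least $2$ and contributes at most $\beta(2)$, while every other voter contributes at most $\beta(1)$. Therefore $\mathrm{sc}_f(S)\le (n-1)\beta(1)+\beta(2)<n\beta(1)=\mathrm{sc}_f(W)$, so $S$ is not winning and $\calR(E,k)=\{W\}$.

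For the ``only if'' direction, suppose $\beta(1)=\beta(2)$; I would exhibit an election in which the consensus committee property fails. Take $C=\{c_1,\dots,c_k,x\}$ and $n=k$ voters, where voter $v_i$ ranks $c_i$ first, $x$ second, and the remaining candidates in an arbitrary order. Then $W=\{c_1,\dots,c_k\}$ meets the hypotheses of the consensus committee property: each voter ranks some $c_i$ first, and each $c_i$ is ranked first by exactly one voter (here $n/k=1$). But the committee $S=\{c_1,\dots,c_{k-1},x\}$ also scores $n\beta(1)$: voters $v_1,\dots,v_{k-1}$ have their favourite member of $S$ (namely $c_i$) in position $1$, and $v_k$ has its favourite member of $S$ (namely $x$) in position $2$, contributing $\beta(2)=\beta(1)$. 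Hence both $W$ and $S$ are winning committees, $\calR(E,k)\ne\{W\}$, and the property fails.

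The proof is essentially routine; the only point requiring a little care is the existence of the displaced voter $v^\star$ in the ``if'' direction, which is exactly why one wants each member of a consensus committee to receive a first-place vote — this holds when $n\ge k$, the only interesting regime, and I would briefly dismiss the degenerate case $n<k$ (where, e.g., replacing a zero-vote member of $W$ by an outside candidate yields a tie even when $\beta(1)>\beta(2)$). The remaining verifications are mechanical: that the election in the ``only if'' direction really satisfies the hypotheses of the consensus committee property, and that $S$ there genuinely ties with $W$, which relies on $\beta(1)=\beta(2)$ together with the placement of $x$ in position $2$ for the one voter who does not rank an $S$-member first.
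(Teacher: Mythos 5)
Your proof is correct and follows essentially the same route as the paper: the paper dismisses the ``if'' direction as clear (you spell out the strict-inequality argument via the displaced voter $v^\star$), and for the ``only if'' direction it uses a profile of $k$ votes over $2k$ candidates with distinct top pairs where you use $k+1$ candidates with a common second choice $x$ --- both are equivalent tie constructions relying on $\beta(1)=\beta(2)$. Your remark about the degenerate regime $n<k$ (where a zero-support member of $W$ can be swapped out for free, so uniqueness fails even with $\beta(1)>\beta(2)$) flags a legitimate edge case that the paper's one-line ``clearly'' silently ignores.
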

\begin{proof}
  Consider an election $E$ that satisfies the conditions of the
  consensus committee property.  If $\gamma(1) > \gamma(2)$ then
  clearly the $k$ candidates from $E$ that are ranked first form a
  unique winning committee. On the other hand, if $\gamma(1) =
  \gamma(2)$ then it is possible to form an election with more than
  one winning committee. It suffices to consider a profile of $k$
  votes over $2k$ candidates, where each vote ranks a distinct pair of
  candidates on top. In this case, there are at least $2^k$ different
  winning committees.
\end{proof}

Our final instantiation of Dummett's proportionality for solid
coalitions is the unanimity property. Every committee scoring rule
satisfies its weak variant.

\begin{theorem}
  Every committee scoring rule $\calR$ satisfies weak unanimity.
\end{theorem}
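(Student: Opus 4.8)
The plan is to exploit the monotonicity condition built into the definition of committee scoring rules. First I would observe that if every voter ranks the same set $W$ of $k$ candidates in the top $k$ positions, then for each voter $v$ the candidates of $W$ occupy exactly the positions $1, 2, \ldots, k$ in $v$'s preference order (in some order), so after sorting we get $\pos_v(W) = (1, 2, \ldots, k)$, independently of $v$. Next I would note that $(1, 2, \ldots, k)$ is the $\succeq$-largest element among all increasing length-$k$ sequences over $[m]$: for any committee $S$ and any voter $v$, writing $\pos_v(S) = (j_1, \ldots, j_k)$ with $j_1 < \cdots < j_k$, the fact that these are $k$ distinct elements of $[m]$ forces $j_1 \ge 1, j_2 \ge 2, \ldots, j_k \ge k$, i.e. $(1, \ldots, k) \succeq \pos_v(S)$. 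The monotonicity requirement on $f$ ($I \succeq J \Rightarrow f(I) \ge f(J)$) then gives $f(1, \ldots, k) \ge f(\pos_v(S))$ for every voter $v$ and every size-$k$ committee $S$.

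Summing over the $\|V\|$ voters yields $\text{sc}_f(W) = \|V\| \cdot f(1, \ldots, k) \ge \sum_{v \in V} f(\pos_v(S)) = \text{sc}_f(S)$ for every $k$-element committee $S$. Hence $W$ maximizes $\text{sc}_f$, so by definition $W \in \calR(E,k)$. Since a committee scoring rule simply outputs all score-maximizing committees, no tie-breaking subtleties enter the argument.

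I do not expect a real obstacle here; the one point worth flagging is why this only establishes the weak variant. The inequality $f(1, \ldots, k) \ge f(\pos_v(S))$ need not be strict: for a representation-focused rule such as $\ell_1$-CC, $f$ depends only on $i_1$, so any committee $S$ that, for every voter, contains that voter's most-preferred candidate of $W$ also attains the maximum score. Thus $\calR(E,k)$ may contain committees other than $W$, which is exactly why the theorem claims weak rather than strong unanimity.
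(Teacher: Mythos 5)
Your proof is correct and follows essentially the same route as the paper's: both observe that since every voter ranks $W$ in the top $k$ positions, the dominance condition on $f$ gives $f(\pos_v(W)) \geq f(\pos_v(Q))$ for every voter $v$ and every size-$k$ set $Q$, and summing over voters shows $W$ is score-maximal. Your version just spells out the $\succeq$-maximality of $(1,\ldots,k)$ explicitly and adds a (correct) remark on why only the weak variant follows.
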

\begin{proof}
  Consider an election $E = (C,V)$ where every voters ranks
  candidates from some set $W$, $\|W\|=k$, on top. Let $f$ be the
  committee scoring function for $\calR$. %
By definition, for each voter $v$ in
  $V$ and each size-$k$ set $Q$ of candidates, we have $f(\pos_v(W)) \geq
  f(\pos_v(Q))$.  Thus, we have $W\in\calR(E, k)$.
\end{proof}

It is immediate that $\ell_1$-Monroe, $\ell_{\min}$-Monroe,
Greedy-Monroe, Bloc and $k$-Borda satisfy strong unanimity, and that
SNTV, $\ell_1$-CC, $\ell_{\min}$-CC, and Greedy-CC do not (to see
this, consider an election where all the voters have the same
preference order; by definition, in those cases these rules choose all
the committees that include the candidate ranked first by all the
voters).

Finally, we note that STV satisfies strong unanimity. If there is some
set $W$ of $k$ candidates that each of the $n$ voters ranks on top,
then in every round of STV there is a candidate from $W$ that is
ranked first by at least $\lfloor \frac{n}{k+1}\rfloor+1$ voters.

\section{Monotonicity}\label{sec:monotonicity}
\noindent
Being monotonic is a natural and easily satisfiable condition for
single-winner rules. Among the few examples of prominent non-monotonic
single-winner rules are STV and Dodgson's
rule~\cite{bra:j:dodgson-remarks}. In contrast, for multiwinner rules
monotonicity is a rather demanding property.  
However, all committee scoring rules satisfy candidate monotonicity,
and all  separable committee scoring rules satisfy non-crossing
monotonicity.

\begin{theorem}\label{thm:csr:monotone}
  Let $\calR$ be a $k$-committee scoring rule. Then $\calR$ satisfies
  candidate monotonicity.
\end{theorem}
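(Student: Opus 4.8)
The plan is to reduce a one-position forward shift of $c$ to a transposition of $c$ with the candidate $b$ sitting immediately above it in the affected vote $v$, and then to compare, committee by committee, the scores before and after this transposition, classifying committees according to how they meet the pair $\{b,c\}$.

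Concretely, I would fix $W \in \calR(E,k)$ with $c \in W$, write $\OPT$ for the common value $\mathrm{sc}_f^E(S)$ attained by the winning committees $S$ in $E$ (equivalently, $\OPT = \max_{|S|=k}\mathrm{sc}_f^E(S)$), and let $E'$ be the election obtained from $E$ by swapping $c$ with $b$ in $v$. Since no other vote changes, for every size-$k$ committee $S$ only the summand coming from $v$, namely $f(\pos_v(S))$, can differ between $\mathrm{sc}_f^E(S)$ and $\mathrm{sc}_f^{E'}(S)$. Using only the defining monotonicity of $f$ (that $I \succeq J$ implies $f(I) \ge f(J)$), I would verify three cases: (i) if $c \in S$ and $b \notin S$, then in $v$ exactly one coordinate of the sorted vector $\pos_v(S)$ drops by one and the others stay fixed, so the new vector $\succeq$-dominates the old one and hence $\mathrm{sc}_f^{E'}(S) \ge \mathrm{sc}_f^E(S)$; (ii) if $c \notin S$ and $b \in S$, the symmetric argument gives $\mathrm{sc}_f^{E'}(S) \le \mathrm{sc}_f^E(S)$; (iii) if $S$ contains both or neither of $b$ and $c$, the multiset $\{\pos_v(d) : d \in S\}$ is unchanged, so $\mathrm{sc}_f^{E'}(S) = \mathrm{sc}_f^E(S)$.

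Then I would assemble the argument. Since $c \in W$, committee $W$ falls into case (i) or (iii), so $\mathrm{sc}_f^{E'}(W) \ge \mathrm{sc}_f^E(W) = \OPT$. Every committee $S$ with $c \notin S$ falls into case (ii) or (iii), so $\mathrm{sc}_f^{E'}(S) \le \mathrm{sc}_f^E(S) \le \OPT \le \mathrm{sc}_f^{E'}(W)$; thus no committee avoiding $c$ can strictly beat $W$ in $E'$. Consequently a committee of maximum $\mathrm{sc}_f$-score in $E'$ is either $W$ itself or has score strictly larger than $\OPT$, and in the latter case it cannot avoid $c$; in both cases it contains $c$, which is exactly what candidate monotonicity demands.

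The one step requiring care is the bookkeeping in case (i): one must confirm that after $c$'s position decreases by one, the resulting list of positions of the members of $S$ in $v$ is still strictly increasing and dominates the old list coordinatewise. This is precisely where the hypothesis $b \notin S$ enters — it ensures that the slot immediately above $c$ in $v$ was not already occupied by another member of $S$, so the updated coordinate remains strictly larger than the next-smaller position in $S$ and sorting produces the expected dominating vector. Everything past this point is a routine comparison of sums and uses no property of $f$ beyond its monotonicity.
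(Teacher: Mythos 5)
Your proposal is correct and follows essentially the same route as the paper's proof: show that the shift cannot decrease the score of $W$ and cannot increase the score of any committee avoiding $c$, then conclude that some winning committee in $E'$ contains $c$. Your three-case analysis on $S\cap\{b,c\}$ is simply a more explicit verification of the two inequalities the paper asserts ``by construction.''
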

\begin{proof}
  Consider an election $E = (C,V)$ with $\|C\|=m$.
 Let $f$ be the $k$-committee scoring
  function defining $\calR$ for $m$ candidates and $k$ winners.
  Let $W$ be a committee in $\calR(E,k)$ and let $c$ be a
  candidate in $W$. Consider a vote $v \in V$ that does not rank $c$
  first, and replace it with a vote $v'$ obtained from $v$ by shifting $c$
  one position forward. Denote the resulting election by $E'$.

  By construction, we have $f(\pos_{v'}(W)) \geq f(\pos_v(W))$. On
  the other hand, for each committee $S\subseteq C\setminus\{c\}$, we
  have $f(\pos_{v'}(S)) \leq f(\pos_v(S))$. Since $W$ was a winning
  committee for $E$, this means that either $W$ is also a winning 
  committee for $E'$ or some committee $W' \in \calR(E',k)$ 
  has a higher score. We must have $c\in W'$, since only committees
  with $c$ can have a higher score in $E'$, compared to $E$.
\end{proof}

\begin{theorem}
  Let $\calR$ be a weakly separable $k$-committee scoring rule. Then
  $\calR$ satisfies non-crossing monotonicity.
\end{theorem}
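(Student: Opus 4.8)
The plan is to track how the total score of every size-$k$ committee changes under the one-step shift and to observe that $W$ gains at least as much as any of its competitors, so that it remains a score-maximizer.

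Fix a family $\{\alpha^m_k\}$ of nonincreasing functions witnessing weak separability of $\calR$ (Definition~\ref{def:weakly}), and write $m = \|C\|$. For a vote $v$ over $C$ and a size-$k$ committee $S$, vote $v$ contributes $\sum_{d \in S}\alpha^m_k(\pos_v(d))$ to $\text{sc}_f(S)$. Suppose $W \in \calR(E,k)$, $c \in W$, and $E'$ is obtained from $E$ by shifting $c$ one position forward in a single vote $v$ in which $c$ is ranked immediately below some $b \notin W$. If $c$ is ranked first in $v$ the hypothesis of non-crossing monotonicity is vacuous, so I may assume $p := \pos_v(c) \ge 2$; then $\pos_v(b) = p-1$. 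Replacing $v$ by the modified vote $v'$ swaps the positions of $b$ and $c$ (so $\pos_{v'}(c) = p-1$ and $\pos_{v'}(b) = p$) and leaves every other position unchanged. Set $\delta := \alpha^m_k(p-1) - \alpha^m_k(p)$, which is nonnegative since $\alpha^m_k$ is nonincreasing.

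First I would analyze an arbitrary size-$k$ committee $S \subseteq C$; since only vote $v$ is altered, the change in $\text{sc}_f(S)$ equals the change in $v$'s contribution. If $c \in S$ and $b \notin S$, the only affected term is $c$'s, rising from $\alpha^m_k(p)$ to $\alpha^m_k(p-1)$, so $S$'s score increases by exactly $\delta$. If $\{b,c\} \subseteq S$, then $b$ and $c$ simply exchange positions inside $S$, so the multiset $\{\pos_{v'}(d) \mid d \in S\}$ equals $\{\pos_v(d) \mid d \in S\}$ and $S$'s score is unchanged. If $c \notin S$, then $S$'s score is unchanged when $b \notin S$ and drops by $\delta$ when $b \in S$. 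Thus in every case the new score of $S$ is at most its old score plus $\delta$, and — because $c \in W$ while $b \notin W$ — the new score of $W$ is exactly its old score plus $\delta$.

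Finally, since $W \in \calR(E,k)$ it has the maximum score among size-$k$ committees in $E$; writing $\text{sc}^{E}_f$ and $\text{sc}^{E'}_f$ for the scores in $E$ and $E'$, we get, for every size-$k$ committee $S$, $\text{sc}^{E'}_f(W) = \text{sc}^{E}_f(W) + \delta \ge \text{sc}^{E}_f(S) + \delta \ge \text{sc}^{E'}_f(S)$. Hence $W$ still maximizes the score in $E'$, i.e.\ $W \in \calR(E',k)$, which is exactly non-crossing monotonicity. The only step that needs care is the case analysis in the previous paragraph — in particular noticing that when both $b$ and $c$ lie in $S$ the set of positions occupied by $S$ does not change — but it poses no real difficulty; everything else follows immediately from $\alpha^m_k$ being nonincreasing.
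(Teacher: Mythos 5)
Your proof is correct and follows essentially the same route as the paper's: both arguments perform the same four-way case analysis on whether a committee contains $c$ and/or the candidate just above it, observe that $W$ gains the (nonnegative) amount $\delta$ while every other committee gains at most $\delta$, and conclude that $W$ remains a maximizer. Your version merely makes the quantity $\delta = \alpha^m_k(p-1)-\alpha^m_k(p)$ explicit via weak separability, where the paper writes it as $f(\pos_{v'}(W)) - f(\pos_v(W))$.
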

\begin{proof}
  Let $E = (C,V)$ be an election with $m$
  candidates. %
  Let $f$ be a weakly separable committee scoring function defining
  $\calR$. %
  Let $W$ be a winning committee and let $c\in W$. Consider a vote $v$
  where $c$ is not directly preceded by a member of $W$. Let $d$ be
  the candidate who is directly above $c$ in $v$ ($d \notin W$) and
  let $v'$ be a vote obtained from $v$ by swapping $c$ and $d$.

  After the swap, committees that include $c$ and not $d$ gain the
  same number
  $f(\pos_{v'}(W)) - f(\pos_v(W))$ of points and this number is
  non-negative; every committee with both $c$ and $d$ (or with neither
  $c$ nor $d$) maintains the same score, and every committee with $d$
  but not $c$ loses $f(\pos_{v'}(W)) - f(\pos_v(W)) \geq 0$
  points. Thus $W \in \calR(E',k)$.
\end{proof}

To complete the discussion of committee scoring rules, we need to
consider non-separable committee scoring rules with respect to the
non-crossing monotonicity.  It appears that these rules (such as
$\ell_1$-CC) do not normally satisfy the non-crossing monotonicity.

\begin{proposition}
\label{prop:cc-non-crossing}
  $\ell_1$-CC, $\ell_1$-Monroe, Greedy-CC, and
  Greedy-Monroe fail non-crossing monotonicity.
\end{proposition}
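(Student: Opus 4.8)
The plan is, for each of the four rules, to build a small election $E$, a winning committee $W\in\calR(E,k)$ containing a candidate $c$, and a vote $v$ in which $c$ is placed immediately below some candidate $b\notin W$, so that after shifting $c$ one position forward in $v$ (call the resulting election $E'$) the committee $W$ is no longer winning. The mechanism is always the same: I would choose the profile so that, in $v$, the member of $W$ that ``serves'' $v$ --- its Chamberlin--Courant representative, or the candidate $v$ is assigned to in the Monroe/greedy assignment --- is ranked strictly above $b$, hence strictly above $c$; then pushing $c$ up past $b$ in $v$ does not change the contribution of $v$ to the score of $W$. At the same time I would arrange for a rival committee $W'\ni c$ that is tied with $W$ in $E$ and for which $c$ is exactly the candidate serving $v$: for $W'$ the shift strictly increases the score (or, for the greedy rules, decisively affects the internal tie-breaking), so $W'$ overtakes $W$ and $W\notin\calR(E',k)$.

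Concretely, for $\ell_1$-CC I would take $C=\{a,b,c,d\}$, $k=2$, and a profile symmetric under exchanging $a$ with $d$, namely the eight votes $c\succ a\succ d\succ b$, $c\succ d\succ a\succ b$, $a\succ c\succ b\succ d$, $d\succ c\succ b\succ a$, $a\succ b\succ c\succ d$, $d\succ b\succ c\succ a$, $c\succ b\succ a\succ d$, $c\succ b\succ d\succ a$, using $v\colon a\succ b\succ c\succ d$ as the vote to be manipulated. A check of the six size-$2$ committees (with Borda satisfaction) shows that $\{a,c\}$ and $\{c,d\}$ are the two tied winners. Under $W=\{a,c\}$ the voter $v$ is represented by $a$, so shifting $c$ past $b$ in $v$ leaves the score of $\{a,c\}$ unchanged, whereas under $W'=\{c,d\}$ the voter $v$ is represented by $c$, so the shift raises the score of $\{c,d\}$ by one and makes it the unique winner; thus $\{a,c\}$ leaves the winner set, violating non-crossing monotonicity. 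The same election also settles Greedy-CC: here $c$ is the unique Borda winner and hence the forced first pick, after which the second pick maximizes $\ell_1(\Phi^\cc(\{c,x\}))$ over $x$, so the Greedy-CC winners coincide with $\{a,c\}$ and $\{c,d\}$ in $E$ and with $\{c,d\}$ alone in $E'$.

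For $\ell_1$-Monroe and Greedy-Monroe the same idea applies, but the construction must be tuned: for Monroe it is not enough that $v$ merely ranks $c$ best within $W'$ --- I must ensure that $v$ is genuinely assigned to $c$ in the $\ell_1$-Monroe-optimal assignment for $W'$, so that the one-position improvement of $c$ in $v$ really lifts the score of $W'$; this typically forces a few extra voters who strongly prefer the $c$-side of $W'$. For Greedy-Monroe one instead follows how shifting $c$ up in $v$ alters which voters $c$ claims in its iteration, and hence which candidate is optimal for the voters left over, and builds a profile in which this effect knocks $W$ out of the winner set. The main obstacle throughout is exactly this bookkeeping: balancing the election delicately enough that a rival committee containing $c$ is tied with $W$ in $E$ and strictly beats it in $E'$, and --- for the Monroe and greedy variants especially --- controlling the optimal assignment and the internal tie-breaking, which is subtler than for plain $\ell_1$-CC because the naive effect of moving $c$ forward is only to help $c$ and every committee containing it.
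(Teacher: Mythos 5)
Your construction for $\ell_1$-CC and Greedy-CC is correct: I checked the eight-vote profile, and indeed $\{a,c\}$ and $\{c,d\}$ tie at satisfaction $21$, $c$ is the unique Borda winner (score $18$ versus $10$ for each of $a,b,d$), and after promoting $c$ past $b$ in $v_5$ the committee $\{c,d\}$ moves to $22$ while $\{a,c\}$ stays at $21$, so $\{a,c\}$ drops out of the winner set under both rules. However, for $\ell_1$-Monroe and Greedy-Monroe you only describe what a construction would have to achieve; you never exhibit one, so half of the proposition is not actually proved. The obstacle you name is real and your own example illustrates it: under $\ell_1$-Monroe (with $n=8$, $k=2$, so each winner must serve exactly four voters) the optimal assignments for $\{a,c\}$ and $\{c,d\}$ are each obtained from the $6$--$2$ Chamberlin--Courant split by reassigning two voters at a total cost of $2$, giving both committees score $19$; after the shift, $\{c,d\}$ gains a point of raw satisfaction but its cheapest rebalancing now costs $3$ instead of $2$ (because $v_5$ was one of the two cheap voters to hand to $d$), so both committees remain tied at $19$ and $\{a,c\}$ is still winning. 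Your $\ell_1$-CC example therefore genuinely fails for $\ell_1$-Monroe.

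The paper closes exactly this gap with one extra idea that is worth internalizing: it pads the profile with dummy candidates $x_1,\ldots,x_6$ (six voters, ten candidates, preferences $a\succ x_1\succ c\succ b\succ\cdots$, etc.) so that for every relevant committee the Chamberlin--Courant assignment is already perfectly balanced, with each winner representing exactly $n/k=3$ voters. The Monroe constraint is then vacuous on those committees, the Monroe score equals the CC score before and after the shift, and the single counterexample transfers verbatim to $\ell_1$-Monroe and (with the padding also pinning down the greedy iterations) to Greedy-Monroe. Without either supplying such a balanced profile or giving separate tuned constructions for the two Monroe variants, your argument establishes the claim only for $\ell_1$-CC and Greedy-CC.
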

\begin{proof}
Consider election $E = (C,V)$ where $C =
  \{a,b,c,d,x_1, \ldots, x_6\}$ and $V = (v_1, \ldots, v_6)$. The
  preference orders of the voters are as follows:
  \newcommand{\fouurvote}[4]{#1 \succ #2 \succ #3 \succ #4 \succ \cdots}
  \begin{align*}
     v_1\colon& \fouurvote a{x_1}cb, 
     &v_2\colon& \fouurvote a{x_2}db,\\
     v_3\colon& \fouurvote b{x_3}ac,
     &v_4\colon& \fouurvote b{x_4}dc,\\
     v_5\colon& \fouurvote c{x_5}ab, 
     &v_6\colon& \fouurvote c{x_6}db.
  \end{align*}  
  The reader can quickly verify that if we seek a committee of size
  $k=2$ then, according to $\ell_1$-CC, there are three winning
  committees, $\{a,b\}$, $\{a,c\}$, and $\{b,c\}$, each with
  satisfaction $6\|C\|-11$. If we focus on committee $\{a,c\}$,
  non-crossing monotonicity requires this committee to still be
  winning after shifting $c$ forward by one position in
  $v_1$. However, if that happens, the satisfaction of $\{a,c\}$ does
  not change but the satisfaction of $\{b,c\}$ increases to
  $6\|C\|-10$.  The same construction works for $\ell_1$-Monroe, Greedy-CC, and
  Greedy-Monroe. %
\end{proof}
While $\ell_{\min}$-CC is not a committee scoring rule, it behaves
similarly to non-separable committee scoring rules. However, this is
not the case for $\ell_{\min}$-Monroe. %

\begin{theorem}\label{thm:ell_min-mon}
  $\ell_{\min}$-CC satisfies candidate monotonicity, but $\ell_{\min}$-Monroe fails it.
  Both $\ell_{\min}$-CC and $\ell_{\min}$-Monroe fail non-crossing monotonicity.
\end{theorem}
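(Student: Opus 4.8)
The plan is to handle the three claims separately, each via an explicit small election. First, for \emph{candidate monotonicity of $\ell_{\min}$-CC}: I would argue much like the proof of Theorem~\ref{thm:csr:monotone}, but adapted to the egalitarian objective. Suppose $W$ is a winning committee, $c \in W$, and $v$ is a vote not ranking $c$ first; let $v'$ shift $c$ forward one position, giving $E'$. Under $\Phi^\cc(\cdot)$ each voter is represented by her top committee member, so shifting $c$ forward can only weakly increase $\alpha(\pos_{v'}(\Phi^\cc(S)(v)))$ for committees $S$ containing $c$, and weakly decrease it for committees $S$ not containing $c$. Since $\ell_{\min}$ is the minimum over voters, the value $\ell_{\min}(\Phi^\cc(W))$ does not decrease in $E'$, while the value of every committee avoiding $c$ does not increase. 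Hence the optimum value in $E'$ is attained by some committee, and any committee attaining a value strictly larger than the old optimum must contain $c$; in the worst case $W$ itself is still optimal. Either way some winning committee of $E'$ contains $c$.

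Next, for the \emph{failure of candidate monotonicity by $\ell_{\min}$-Monroe}, I would construct a small election (candidate set of size around $5$ or $6$, with $n$ divisible by $k$ so that the Monroe constraint forces a perfect equipartition of voters into blocks of size $n/k$) in which some committee $W \ni c$ is uniquely winning, but shifting $c$ forward in one vote forces the mandatory rebalancing of the assignment: because the shift changes which block a voter can be cheaply assigned to, the least-satisfied voter in every assignment extending $W$ becomes worse off, so $W$ (and indeed any committee containing $c$) is no longer optimal. The design principle is that under $\ell_{\min}$-Monroe the bottleneck voter's satisfaction can be controlled by a single voter who is ``trapped'' in a bad block, and moving $c$ up in an unrelated vote can eject a voter from the block where $c$ currently absorbs slack. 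Verifying that no other committee containing $c$ recovers is the routine bookkeeping; I would pick the votes so that the only way to re-achieve the old $\ell_{\min}$ value is to drop $c$ entirely.

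For the \emph{failure of non-crossing monotonicity by both rules}, I expect to reuse or lightly modify the election from Proposition~\ref{prop:cc-non-crossing}: for $\ell_1$-CC the tie among $\{a,b\},\{a,c\},\{b,c\}$ with the dummy candidates $x_i$ as second choices is already the obstruction, and I would check that with the Borda satisfaction function the $\ell_{\min}$ versions exhibit the same tie (each voter's representative sits in position $1$ or $3$, so the $\ell_{\min}$ value is the same across the three symmetric committees), and that shifting $c$ forward by one in $v_1$ strictly improves $\{b,c\}$ while leaving $\{a,c\}$'s $\ell_{\min}$ unchanged, breaking the requirement that $\{a,c\}$ stay winning. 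For $\ell_{\min}$-Monroe the same profile with $n$ divisible by $k$ should work with essentially the same computation, since the equipartition into singletons-per-representative is forced and coincides with the CC assignment here.

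The main obstacle is the $\ell_{\min}$-Monroe candidate-monotonicity counterexample: unlike the CC case, one must simultaneously respect the cardinality constraint $\lfloor n/k\rfloor \le \|\Phi^{-1}(c)\| \le \lceil n/k\rceil$ and ensure the shift genuinely destroys \emph{every} optimal assignment containing $c$, not merely one of them. Getting a minimal example right — small enough to state cleanly, yet with the block structure rigid enough that the egalitarian optimum is pinned down — will take the most care; everything else is a short verification.
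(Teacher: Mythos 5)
Your first part (candidate monotonicity of $\ell_{\min}$-CC) matches the paper's argument and is fine. The other two parts have genuine gaps. For the failure of candidate monotonicity by $\ell_{\min}$-Monroe, you give only a design principle and explicitly defer the construction; since the entire content of that claim is the existence of a counterexample, the proof is incomplete as it stands. (The paper's example is in fact quite small: $C=\{a,b,c,d\}$, six voters, $k=2$, with $\{a,b\}$ and $\{c,d\}$ tied at satisfaction $\|C\|-2$; shifting $a$ forward in the vote $d\succ b\succ a\succ c$ steals the voter that $b$ needs under the forced $3$--$3$ partition, dropping $\{a,b\}$ to $\|C\|-3$ while $\{c,d\}$ is untouched, so no winning committee contains $a$.)

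The more serious problem is your plan for non-crossing monotonicity: the profile of Proposition~\ref{prop:cc-non-crossing} does \emph{not} transfer to $\ell_{\min}$-CC. In that profile the committee $\{a,d\}$ intersects every voter's top three positions ($a$ serves $v_1,v_2,v_3,v_5$ at positions $1,1,3,3$ and $d$ serves $v_4,v_6$ at position $3$), so its egalitarian satisfaction is $\|C\|-3$, strictly better than the value $\|C\|-4$ attained by $\{a,b\}$, $\{a,c\}$, $\{b,c\}$ (e.g.\ $v_4$'s best member of $\{a,c\}$ sits at position $4$). Your claim that ``each voter's representative sits in position $1$ or $3$'' is false, and the three committees you assume to be tied winners are not winners at all under $\ell_{\min}$-CC, so the subsequent shift argument does not get off the ground. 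This is precisely why the paper builds a separate profile for the $\ell_{\min}$ case (with eleven padding candidates and a voter $x_3\succ x_4\succ x_5\succ c\succ\cdots$ acting as the bottleneck), arranged so that shifting $c$ forward in that bottleneck vote leaves $\{b,c\}$'s minimum unchanged while raising $\{a,c\}$'s. You would need to redo this construction from scratch rather than lightly modify the $\ell_1$ example.
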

\begin{proof}
  The proof of the first part is similar to that of
  Theorem~\ref{thm:csr:monotone}. If $W$ is a winning committee for
  $\ell_{\min}$-CC and $c$ is a candidate in $W$, then shifting $c$
  forward has the following effect. The aggregate satisfaction of
  every committee that includes $c$ either increases by one, or stays
  the same. The aggregate satisfaction of every committee not
  containing $c$ either stays the same or decreases by one. Thus $c$
  belongs to at least one winning committee. To show that
  $\ell_{\min}$-Monroe fails candidate monotonicity, consider an
  election with candidate set $C = \{a,b,c,d\}$ and four
  voters %
  $V = (v_1,v_2,v_3, v_4, v_5, v_6)$, with preference orders
  as follows:
  \begin{align}
  \label{v4}
     v_1\colon& \fourvote acdb,  
   & v_2\colon& \fourvote cadb,\nonumber\\
     v_3\colon& \fourvote bdca, 
   & v_4\colon& \fourvote dbac,\\
     v_5\colon& \fourvote dacb, 
   & v_6\colon& \fourvote bcda\nonumber.
  \end{align}
  For $k=2$, the winning committees are $\{a,b\}$ and $\{c,d\}$,
  both with satisfaction $\|C\|-2$. If we shift $a$ forward by one
  position in $v_4$, the satisfaction of $\{a,b\}$ decreases to
  $\|C\|-3$ but the satisfaction of $\{c,d\}$ stays the
  same.

  To see that $\ell_{\min}$-CC fails non-crossing monotonicity,
  consider election $E = (C,V)$ with $C = \{a,b,c,x_1, \ldots,x_{11}\}$
  and $V = (v_1, v_2,v_3, v_4, v_5, v_6)$, where the voters have the following
  preference orders: %
  \newcommand{\threevote}[4]{#1 \succ #2 \succ #3 \succ #4 \succ
    \cdots}
  \begin{align*}
     v_1\colon& \threevote a{x_1}{x_2}b, 
     &v_2\colon& \threevote {x_3}{x_4}{x_5}c,\\
     v_3\colon& \threevote bac{x_6},
     &v_4\colon& \threevote bac{x_7},\\
     v_5\colon& \threevote bc{x_8}{x_9},
     &v_6\colon& \threevote ac{x_{10}}{x_{11}}.
  \end{align*}
  We seek a committee of size $k=2$.  There are four tied ones,
  $\{b,c\}$ and $\{a,c\}$, $\{x_1,c\}$ and $\{x_2,c\}$, all with
    aggregate satisfaction $\|C\|-4$. These are the
    winners. Non-crossing monotonicity requires that if we shift $c$
    forward in $v_2$, then $\{b,c\}$ should still be a winning
    committee. However, this committee's satisfaction stays the same,
    whereas the satisfaction of $\{a,c\}$ increases to $\|C\|-3$.
  The same construction applies to $\ell_{\min}$-Monroe. %
\end{proof}

The remaining multiwinner rules studied in this paper fail each of our
monotonicity criteria. For STV this is well-known to happen even for $k=1$.
For the rest of the rules, we  provide the following result.

\begin{proposition}
  $\ell_1$-Monroe, Greedy-Monroe, and Greedy-CC fail candidate monotonicity.
\end{proposition}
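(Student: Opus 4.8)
The plan is to exhibit, for each of the three rules $\ell_1$-Monroe, Greedy-Monroe, and Greedy-CC, a small election together with a candidate $c$ in a winning committee such that shifting $c$ one position forward in a single vote destroys every winning committee containing $c$. Since candidate monotonicity only requires $c$ to remain in \emph{some} winning committee, it suffices to make the forward shift strictly decrease the relative standing of all committees that include $c$ while leaving some $c$-free committee as the unique (or at least a strict) winner. I expect one shared construction to work for $\ell_1$-Monroe and Greedy-Monroe (because on small profiles with $k=2$ and $n$ divisible by $k$ the Monroe assignments and the greedy assignments coincide closely enough), with possibly a separate small tweak needed for Greedy-CC because its first pick is the Borda winner rather than the best ``block'' of voters.

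\medskip
\noindent\textbf{First} I would reuse the flavor of the election from the proof of Proposition~\ref{prop:cc-non-crossing}: take $k=2$ and a profile engineered so that two committees $\{a,b\}$ and $\{c,d\}$ are tied for winning under the utilitarian objective, with $a$ sitting in a position in some vote $v$ where pushing $a$ forward either (i) cannot help $\{a,b\}$ because $a$ is already optimally assigned in $\{a,b\}$'s best assignment, or (ii) helps a competing $a$-free committee more. The key mechanism: in an $\ell_1$-Monroe or Greedy-Monroe assignment each candidate represents exactly $n/k$ voters, so the marginal value of shifting $a$ forward in vote $v$ is zero unless $v$ is actually assigned to $a$ in the relevant optimal assignment. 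By arranging that $v$ is \emph{not} assigned to $a$ in the unique optimal assignment realizing $\{a,b\}$'s score, but \emph{is} assigned to $a$ in some assignment realizing a larger score for a $\{a,?\}$-style competitor that then overtakes, we force $a$ out. Concretely I would look for a 4-candidate, 6-voter profile (as in the $\ell_{\min}$-Monroe example of Theorem~\ref{thm:ell_min-mon}, which already shows $\{a,b\}$ and $\{c,d\}$ tied) and verify by direct enumeration that the analogous phenomenon holds for the $\ell_1$ objective, namely that after the shift $\{c,d\}$ becomes the strict winner and no winning committee contains $a$.

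\medskip
\noindent\textbf{Then} for Greedy-CC I would argue separately, exploiting the same feature already observed in the paper: Greedy-CC's decision to include a candidate depends on the preferences of voters to whom that candidate would \emph{not} be assigned (the source of its failure of the consensus committee property). I would build a profile where $c_1$ is the Borda winner (hence $W_1=\{c_1\}$), and at the second step $c_2$ beats a rival $c_2'$ only narrowly; then shifting $c_2$ forward in a vote that is, at the moment of the greedy choice, already represented by $c_1$ (so $c_2$'s marginal $\ell_1(\Phi^{\cc}(\{c_1,c_2\}))$ does not increase there) while that same shift \emph{does} raise $c_1$'s Borda score is not the issue — rather I want a shift of $c_2$ in a vote where $c_2$ is behind some candidate $b$, so that the swap can only change $c_1$'s status or leave $c_2$'s marginal contribution flat while another candidate's marginal contribution grows; since $c_1$ is fixed as the Borda winner regardless, the second pick can flip away from $c_2$. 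A 4-or-5-candidate, 6-voter example of the type used throughout this section should suffice, checked by hand.

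\medskip
\noindent\textbf{The main obstacle} will be the bookkeeping for $\ell_1$-Monroe: unlike Greedy-Monroe and Greedy-CC, its winner determination ranges over all valid Monroe assignments (with the $\lfloor n/k\rfloor,\lceil n/k\rceil$ balance constraint), so I must check \emph{every} committee of size $2$ and its optimal assignment both before and after the shift to be certain that (a) $\{a,b\}$ and $\{c,d\}$ are genuinely tied before, and (b) \emph{no} committee containing $a$ is winning after. With $n$ a small multiple of $k$ this is a finite, if slightly tedious, verification, and I would present it compactly via a table of committee scores rather than narrating each assignment. If a single profile does not simultaneously work for all three rules, I will simply give two short profiles, mirroring how Proposition~\ref{prop:cc-non-crossing} and Theorem~\ref{thm:ell_min-mon} already split their constructions.
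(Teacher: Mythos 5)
Your overall strategy---explicit small counterexamples for each rule, reusing the six-voter profile from Theorem~\ref{thm:ell_min-mon} for $\ell_1$-Monroe and giving separate profiles for the two greedy rules---is exactly the paper's route: the paper disposes of $\ell_1$-Monroe in one line by observing that profile~\eqref{v4}, with $a$ shifted forward in $v_4$, works for the utilitarian objective as well, and then exhibits an eight-voter profile for Greedy-Monroe and a six-voter profile for Greedy-CC. Your fallback of splitting the constructions is the one you will actually need; no shared profile is used for $\ell_1$-Monroe and Greedy-Monroe.

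There is, however, a genuine error in the mechanism you describe for the Monroe case. Shifting $a$ one position forward can never \emph{increase} the optimal score of a committee that does not contain $a$ (it either leaves it unchanged or, if $a$ overtakes a member of that committee, decreases it), so the shift cannot ``help a competing $a$-free committee more''; and if, as you write, an ``$\{a,?\}$-style competitor then overtakes,'' then $a$ still belongs to a winning committee and candidate monotonicity is \emph{not} violated. The only way to expel $a$ from every winning committee is to make every $a$-containing committee strictly lose ground relative to some $a$-free one, and for $\ell_1$-Monroe this is possible precisely because $a$ can overtake a \emph{fellow member} of its own winning committee: in profile~\eqref{v4}, $a$ overtakes $b$ in $v_4$, a vote that the balance constraint forces to be served by $b$, so the Monroe-optimal score of $\{a,b\}$ drops from $15$ to $14$ while $\{c,d\}$ stays at $15$. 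Your by-enumeration verification on that profile would still succeed despite the faulty narrative, but if you carry the stated mechanism into the search for Greedy-Monroe and Greedy-CC examples you will be hunting for an impossible configuration. For those rules the correct lever is different again: the shift alters which candidate (or which block of $n/k$ voters) is grabbed at an earlier greedy iteration, and the ensuing cascade excludes the shifted candidate---in the paper's Greedy-CC example the forward shift of $c$ in $v_6$ breaks the first-round Borda tie between $a$ and $b$ in favor of $b$, which is closer to, but not the same as, your plan of keeping the first pick fixed and flipping only the second.
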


\begin{proof}
  First, we note that the argument given in
  Theorem~\ref{thm:ell_min-mon} to show that $\ell_{\min}$-Monroe is not
  candidate monotone (profile~\eqref{v4} and moving $a$ forward in
  vote $v_4$) in fact also applies to $\ell_1$-Monroe.

  We move on to Greedy-Monroe. Let $E = (C,V)$ be an election with $C
  = \{a,b,c,d\}$ and $V = (v_1, \ldots, v_8)$ whose preference orders
  are:
  \begin{align*}
     v_1\colon&  \fivevote bcdae,
     &v_2\colon& \fivevote dcbae,\\
     v_3\colon&  \fivevote aedbc,
     &v_4\colon& \fivevote abdce,\\
     v_5\colon&  \fivevote aedbc,
     &v_6\colon& \fivevote bdace,\\
     v_7\colon&  \fivevote dcbae,
     &v_8\colon& \fivevote cbdae.
  \end{align*}
  For $k = 2$, under Greedy-Monroe there are two winning committees
  $\{a,c\}$ and $\{b,d\}$. This is so because in the first iteration,
  Greedy-Monroe picks either $a$ or $b$. If it picks $a$, then in the
  next iteration it picks $c$. If it picks $b$, then irrespective
  which voters it chooses to assign to $b$ in the first iteration
  (among the assignments allowed under Greedy-Monroe), in the second
  iteration it picks $d$. However, if we shift $c$ forward by one
  position in $v_6$ then only $\{b,d\}$ remains winning (Greedy-Monroe
  no longer can pick $a$ in the first iteration). This shows that
  Greedy-Monroe is not candidate monotone.

  For the case of Greedy-CC, consider the election $E = (C,V)$ with
  $C = \{a,b,c,d\}$ and $V = (v_1, \ldots, v_6)$, where 
  \begin{align*}
     v_1\colon&  \fourvote abcd,
     &v_2\colon& \fourvote bcda,\\
     v_3\colon&  \fourvote abcd,
     &v_4\colon& \fourvote cbda,\\
     v_5\colon&  \fourvote abcd,
     &v_6\colon& \fourvote dacb.
  \end{align*}
  For $k=2$, Greedy-CC %
  declares $\{a,b\}$ and $\{a,c\}$ as winners. (In the first iteration
  Greedy-CC picks either $a$ or $b$. In the former case, in the second
  iteration it picks either $b$ or $c$. In the latter case, i.e., if
  it picks $b$ in the first iteration, in the second iteration it
  picks $c$). However, if we shift $c$ forward by one position in
  $v_6$, then $\{a,b\}$ and $\{a,d\}$ are winning (in this case, in
  the first iteration Greedy-CC has to pick $b$, and then in the
  second iteration there is a tie between $a$ and $d$). Thus Greedy-CC
  fails candidate monotonicity as well.
\end{proof}

\section{Consistency and Homogeneity}\label{sec:con-hom}
\noindent
For single-winner rules, the famous Young's theorem~\cite{you:j:scoring-functions} says that only
scoring rules and their compositions satisfy
consistency. While we do not know
how to extend this result to multiwinner rules, the
situation seems to be similar: We show that every committee
scoring rule satisfies consistency, whereas other rules fail it.

\begin{proposition}
  Every committee scoring rule satisfies consistency. In particular,  $\ell_1$-CC is consistent.
\end{proposition}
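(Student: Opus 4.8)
The plan is to exploit the fact that the score $\text{sc}_f(S)=\sum_{v\in V}f(\pos_v(S))$ of a committee $S$ is additive over the voters, exactly as in Young's argument for single-winner scoring rules. For an election $E=(C,V)$ write $\text{sc}_f(S,E)$ for this score, so that $\text{sc}_f(S,E_1+E_2)=\text{sc}_f(S,E_1)+\text{sc}_f(S,E_2)$, and recall that $\calR(E,k)$ consists precisely of the size-$k$ committees that maximize $\text{sc}_f(\cdot,E)$ (for a committee scoring rule there is no further tie-breaking beyond taking this whole set of maximizers).

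First I would fix elections $E_1=(C,V_1)$, $E_2=(C,V_2)$, a value $k\in[\|C\|]$, and a committee $W\in\calR(E_1,k)\cap\calR(E_2,k)$; such a $W$ exists by the hypothesis of consistency. Let $M_i$ denote the maximum of $\text{sc}_f(\cdot,E_i)$ over size-$k$ committees, so $\text{sc}_f(W,E_i)=M_i$ for $i=1,2$. For any size-$k$ committee $S$ we then have $\text{sc}_f(S,E_1+E_2)=\text{sc}_f(S,E_1)+\text{sc}_f(S,E_2)\le M_1+M_2=\text{sc}_f(W,E_1+E_2)$, so the maximum score in $E_1+E_2$ equals $M_1+M_2$ and $W$ attains it; in particular $W\in\calR(E_1+E_2,k)$.

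The key step is the converse direction. Take any $S\in\calR(E_1+E_2,k)$; then $\text{sc}_f(S,E_1)+\text{sc}_f(S,E_2)=M_1+M_2$ while $\text{sc}_f(S,E_1)\le M_1$ and $\text{sc}_f(S,E_2)\le M_2$, which forces both inequalities to be equalities. Hence $S\in\calR(E_1,k)\cap\calR(E_2,k)$. Conversely, if $S\in\calR(E_1,k)\cap\calR(E_2,k)$ then $\text{sc}_f(S,E_1+E_2)=M_1+M_2$, which we have just shown to be the maximum, so $S\in\calR(E_1+E_2,k)$. Combining the two inclusions yields $\calR(E_1+E_2,k)=\calR(E_1,k)\cap\calR(E_2,k)$, i.e.\ consistency. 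Since $\ell_1$-CC is the committee scoring rule defined by $f_\cc$ in Example~\ref{ex1}, the ``in particular'' claim is immediate.

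There is no genuine obstacle here; the only point that needs care is that consistency asks for an equality of \emph{sets} of winning committees, not merely that a single common winner survives, so the argument must be run in both directions and must use that $\calR$ returns the entire collection of score-maximizing committees rather than a single representative.
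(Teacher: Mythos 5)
Your proof is correct and follows essentially the same route as the paper's: both arguments rest on the additivity of the committee score over voters, first showing that a common winner $W$ remains optimal in $E_1+E_2$, and then using the squeeze $\text{sc}_f(S,E_i)\le M_i$ to force any winner of the combined election to be a winner of each part. Your version with the explicit maxima $M_1,M_2$ is just a cleaner packaging of the paper's contradiction-based phrasing.
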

\begin{proof}
  Let $\calR$ be a $k$-committee selection scoring rule %
and let $f$ be the corresponding
  committee scoring function. Let $E_1$ and $E_2$ be two elections,
  $E_1 = (C,V_1)$ and $E_2 = (C,V_2)$, where $\|C\| = m$, and where
  $\calR(E_1,k) \cap \calR(E_2,k) \neq \emptyset$.

  Let $W$ %
  be a committee such that $W \in
  \calR(E_1,k)$ and $W \in \calR(E_2,k)$.
  We claim that $W \in \calR(E_1+E_2)$. If it were not the case, then
  there would have to be a committee $Q$ %
  of size $k$ such that either the score of $Q$ in $E_1$ or the score of $Q$
  in $E_2$ were higher than that of $W$ in these elections. This
  contradicts the choice of $W$.

  Similarly, if $T$ is a winning committee for $E_1+E_2$ then it
  must be the case that $T \in \calR(E_1,k)$ and $T \in
  \calR(E_2,k)$. This is so, because the score of $T$ in $E_1+E_2$ is
  the same as that of $W$. However, if $T$ were not in $\calR(E_1,k)$,
  then this would mean that the score of $T$ in $E_1$ is smaller than
  the score of $W$ in $E_1$ and, so, the score of $T$ in $E_2$ is
  higher than that of $W$ in $E_2$.  This would mean that it is not
  the case that $W \in \calR(E_1,k)$ and $W \in \calR(E_2,k)$.
\end{proof}

Now let us turn our attention to the negative results. We note that
some of our rules, such as STV, $\ell_{\min}$-CC and
$\ell_{\min}$-Monroe, can be ruled out to be consistent on the basis
of Young's result because for $k=1$ these rules are not scoring rules.

\begin{corollary}
  Neither of STV, $\ell_{\min}$-CC and $\ell_{\min}$-Monroe is
  consistent.
\end{corollary}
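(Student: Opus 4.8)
The plan is to reduce to the single-winner case and invoke Young's characterization of consistent scoring rules~\cite{you:j:scoring-functions}. Since the definition of consistency quantifies over every $k\in[\|C\|]$, a consistent $k$-committee selection rule is in particular consistent for $k=1$, i.e., the single-winner rule it induces is consistent. Young's theorem says that a consistent single-winner rule must be a composition of positional scoring rules, so it suffices to show that for $k=1$ none of STV, $\ell_{\min}$-CC, and $\ell_{\min}$-Monroe behaves like such a composition.

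First I would pin down the $k=1$ specialisations. For STV the Droop quota with $k=1$ is $\lfloor \|V\|/2\rfloor+1$, so STV is exactly instant-runoff (Hare) voting. For $\ell_{\min}$-CC with $k=1$ every admissible assignment function maps all voters to a single candidate $c$, so $\ell_{\min}(\Phi)=\min_{v}\alpha^m_\borda(\pos_v(c))=m-\max_{v}\pos_v(c)$; hence the rule returns exactly the candidates whose \emph{worst} position across the voters is smallest. For $\ell_{\min}$-Monroe with $k=1$ the Monroe criterion forces the single elected candidate to be the image of all $\|V\|$ voters, so this rule coincides with the $k=1$ version of $\ell_{\min}$-CC.

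Next I would show that neither single-winner rule is a composition of scoring rules. For STV it is enough to recall that every composition of scoring rules is candidate-monotone (shifting a candidate up by one position weakly raises its score and weakly lowers everyone else's, at every stage, so a winner stays a winner), whereas instant-runoff voting is the textbook example of a non-monotone rule; hence STV at $k=1$ is not such a composition. For the ``best worst-position'' rule I would exhibit a direct failure of consistency already at $k=1$: over $C=\{a,b,c\}$ with the Borda satisfaction function, let $V_1$ consist of the votes $a\succ b\succ c$ and $b\succ a\succ c$ (then $a$ and $b$ have worst position $2$ and $c$ has worst position $3$, so the winner set is $\{a,b\}$) and let $V_2$ be the cyclic profile $a\succ b\succ c$, $b\succ c\succ a$, $c\succ a\succ b$ (then every candidate has worst position $3$, so the winner set is $\{a,b,c\}$). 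Here $a$ belongs to both winner sets, yet in $V_1+V_2$ every candidate still has worst position $3$, so the winner set is $\{a,b,c\}$, which differs from $\{a,b\}=\calR((C,V_1),1)\cap\calR((C,V_2),1)$; consistency fails. Since a consistent rule would have to be a composition of scoring rules and none of the three rules is, none of them is consistent, already for $k=1$ and hence in general.

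The main obstacle is the appeal to Young's theorem itself: it is usually stated for single-winner correspondences, sometimes together with anonymity, neutrality and a continuity/Archimedean condition, and one should check that the rules being ruled out really fall within its scope — or, as above, simply bypass it, since the explicit profiles settle $\ell_{\min}$-CC and $\ell_{\min}$-Monroe directly, and the non-monotonicity of instant-runoff voting together with the monotonicity of every scoring-rule composition settles STV without using Young at all.
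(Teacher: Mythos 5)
Your proposal is correct and follows the same basic route as the paper: restrict to $k=1$ and appeal to Young's characterization of consistent (anonymous, neutral) social choice correspondences as compositions of scoring rules. The paper's own proof is a one-line remark asserting that for $k=1$ these rules ``are not scoring rules''; you supply the justification that the paper omits, and in doing so you actually improve on it. For $\ell_{\min}$-CC and $\ell_{\min}$-Monroe you correctly identify the $k=1$ specialisation as the ``minimize the worst position'' rule (the Monroe quota constraint at $k=1$ forces all voters onto the single winner, so the two rules coincide there), and your explicit profiles $V_1$, $V_2$ give a direct violation of consistency at $k=1$ --- the winner sets are $\{a,b\}$, $\{a,b,c\}$, and $\{a,b,c\}$ respectively, so the combined election does not return the nonempty intersection $\{a,b\}$. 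This bypasses Young entirely for those two rules, which is a genuine gain: you do not need to worry about the exact hypotheses (neutrality, anonymity, the continuity condition) under which Young's theorem applies. For STV your argument that every composition of scoring rules is candidate-monotone, while instant-runoff voting is not, is sound (a shift of $c$ weakly raises $c$'s score and weakly lowers all others' at every stage, so the surviving tie set at each stage shrinks while retaining $c$), though here you do still lean on Young; the paper leans on it for all three rules. In short: same decomposition, but your version is self-contained where the paper's is not.
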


On the other hand, for $k=1$ each of $\ell_1$-Monroe, Greedy-CC and
Greedy-Monroe is equivalent to Borda, which is consistent. For these
rules, we directly show that they are not consistent.

\begin{proposition}
  Neither of 
  $\ell_1$-Monroe, Greedy-CC and Greedy-Monroe is consistent.
\end{proposition}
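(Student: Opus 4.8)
The plan is to refute consistency for each of the three rules separately, by exhibiting two elections $E_1 = (C,V_1)$ and $E_2 = (C,V_2)$ over a common candidate set together with a committee size $k$ such that $\calR(E_1,k) \cap \calR(E_2,k) \neq \emptyset$ but $\calR(E_1+E_2,k) \neq \calR(E_1,k) \cap \calR(E_2,k)$. Since each of these rules coincides with Borda when $k = 1$ (as noted just above) and Borda is consistent, any witness must have $k \geq 2$; I would hunt for the smallest examples, taking $k = 2$ and $\|C\| = 4$, with profiles of $4$ to $8$ voters, possibly adapting the $4$-candidate profiles already used in this section. In each case the cleanest form is to make one committee $W$ the unique winner in both $E_1$ and $E_2$, and then show $W \notin \calR(E_1+E_2,k)$, so that the right-hand side is $\{W\}$ while the left-hand side is not.

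For $\ell_1$-Monroe the property to exploit is that the ``value of a committee $S$'' — the maximum of $\ell_1(\Phi)$ over balanced assignments $\Phi$ with image $S$ — is a cardinality-constrained min-weight perfect-matching value, and this is \emph{not} additive over voters. When $E_1$ and $E_2$ are concatenated, the Monroe balance constraint $\lfloor n/k\rfloor \le \|\Phi^{-1}(c)\| \le \lceil n/k\rceil$ changes (for instance from ``exactly $2$'' in each component to ``exactly $4$'' in the union), so the optimal assignment for $E_1+E_2$ need not glue together the per-component optima. I would choose the preference orders so that $W$ wins each component by a narrow margin, but so that in $E_1+E_2$ the extra freedom to redistribute the now-larger quota of voters lets a competing committee overtake $W$; verifying this reduces to checking the best balanced assignment value for each of the $\binom{4}{2} = 6$ candidate pairs in the three elections. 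If it shortens the bookkeeping, one can look for profiles on which the $\ell_1$- and $\ell_{\min}$-Monroe optima coincide and lean on the computations already done for Theorem~\ref{thm:ell_min-mon}.

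For Greedy-CC and Greedy-Monroe I would instead trace the greedy procedure directly, exploiting that these rules are procedural: the first iteration picks a Borda-type optimum, and combining two profiles can flip which candidate is taken first, which then cascades — through the $\Phi^\cc$-based marginal gains for Greedy-CC, and through the group-assignment step for Greedy-Monroe — into a different second-round choice and hence a different committee. Concretely, I would arrange that $W$ is the unique greedy output on both $E_1$ and $E_2$, but that the first-round winner changes on $E_1+E_2$, forcing the run onto a branch whose output is not $W$. Because $\calR$ under parallel-universes tie-breaking returns all committees reachable by some sequence of tie-breaks, I must additionally confirm that no tie-break on $E_1+E_2$ recovers $W$ and none on $E_1$ or $E_2$ removes it from the common output.

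The main obstacle is verification rather than the idea. For $\ell_1$-Monroe one must be certain the claimed committees really are the optima, which means ruling out all six candidate pairs and confirming, for each, its best balanced assignment value — bounded but tedious. For the two greedy rules the subtlety is exhaustively covering every intermediate tie-break in all three elections so that the irresolute outputs are pinned down exactly; a slightly careless example can have a stray tie-break branch that restores $W$. Finally, for contrast with the positive result above, it is worth remarking why no such failure can occur for genuine committee scoring rules: there the score of a committee is a sum over voters, so the weak order on committees by score is preserved under concatenation of profiles, whereas none of $\ell_1$-Monroe, Greedy-CC, Greedy-Monroe is governed by such a voter-additive global objective.
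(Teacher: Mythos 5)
Your strategy is exactly the paper's: for each rule, exhibit two elections and a committee size for which some committee wins both components but the union's winner set differs from the intersection. Your diagnosis of \emph{why} each rule can fail is also sound --- for $\ell_1$-Monroe the balance constraint $\lfloor n/k\rfloor \le \|\Phi^{-1}(c)\| \le \lceil n/k\rceil$ rescales under concatenation so component-optimal assignments need not glue together, and for the greedy rules the first-round choice (a Borda-type optimum) can flip on the combined profile and cascade. Indeed the paper's Greedy-CC example works by precisely the mechanism you describe: $c$ is second in almost every vote of both components, so it becomes the unique Borda winner of $E_1+E_2$ and is forced into the committee even though $\{a,b\}$ wins each component. (The paper's Greedy-Monroe example is slightly different from what you sketch: the first-round pick does not change, but a new tie-break branch appears in the union, so $\calR(E_1+E_2,k)$ strictly contains the intersection rather than missing it. Your plan of ``$W$ unique in both components, $W\notin\calR(E_1+E_2,k)$'' is one sufficient pattern but not the only one, and not the one the paper uses for Greedy-Monroe.)

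The genuine gap is that you have not produced any of the three counterexamples, and for a proposition of this form the counterexamples \emph{are} the proof. ``I would hunt for the smallest examples'' and ``I would choose the preference orders so that $W$ wins each component by a narrow margin'' describe a search, not its outcome; nothing in the proposal can be checked, and you concede as much when you write that the main obstacle is verification. In particular, for $\ell_1$-Monroe you still owe an explicit pair of profiles together with the computation ruling out all six size-$2$ committees in each of the three elections (the paper does this with two $4$-voter profiles over $5$ candidates, where $\{a,c\}$ wins both components but $\{a,b\}$ beats it in the union), and for the greedy rules you owe the trace through every parallel-universe tie-break, which is the part most prone to a stray branch restoring the committee you are trying to exclude. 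Until those constructions are supplied and verified, the proposition is not proved.
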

\begin{proof}
  Consider $\ell_1$-Monroe. Let $E_1 = (C,V_1)$ and $E_2 = (C,V_2)$ be
  two elections, where $C = \{a,b,c,d,e\}$, $V_1 = (v_1, v_2,v_3, v_4)$
  and $V_2 = (v_5, v_6,v_7, v_8)$.  The voters in $V_1$ have the
  following preferences: 
  \newcommand{\vote}[5]{#1 \succ #2 \succ #3 \succ #4 \succ #5}
  \begin{align*}
     v_1\colon&  \vote acdeb,
     &v_2\colon& \vote acdeb,\\
     v_3\colon&  \vote acdeb,
     &v_4\colon& \vote becda.
  \end{align*}
  The voters in $V_2$ have the following preferences:
  \begin{align*}
     v_5\colon&  \vote abcde,
     &v_6\colon& \vote abcde,\\
     v_7\colon&  \vote cdbea,
     &v_8\colon& \vote bcdea.
  \end{align*}
  We seek a committee of size $k=2$.  It is straightforward to check that under $\ell_1$-Monroe, committee
  $\{a,c\}$ would win in both $E_1$ and $E_2$ but in $E_1+E_2$,
  committee $\{a,b\}$ has higher satisfaction than $\{a,c\}$. %

  For Greedy-Monroe, we use $C = \{a,b,c,d\}$ and elections $E_1 =
  (C,V_1)$ and $E_2 = (C,V_2)$, where $V_1$ and $V_2$ have two voters
  each. The voters in $V_1$ have the following preferences:
  \newcommand{\votef}[4]{#1 \succ #2 \succ #3 \succ #4}
  \begin{align*}
     v_1\colon&  \votef abcd,
     &v_2\colon& \votef abcd.
  \end{align*}
  The voters in $V_2$ have the following preference orders:
  \begin{align*}
     v_3\colon&  \votef acdb,
     &v_4\colon& \votef bcda.
  \end{align*}
  For $k=2$, Greedy-Monroe chooses $\{a,b\}$ as the winning committee
  in both elections, but for $E_1+E_2$ it chooses $\{a,b\}$ and
  $\{a,c\}$ (due to the parallel-universe tie-breaking). This shows
  that Greedy-Monroe is not consistent.
  
  For Greedy-CC we, again, take $C = \{a,b,c,d\}$ and elections $E_1 = (C,V_1)$
  and $E_2 = (C,V_2)$ but this time $V_1$ and $V_2$ contain four voters each. 
  The voters in $V_1$ have the following preference orders:
  \begin{align*}
     v_1\colon&  \votef acdb,
     &v_2\colon& \votef acdb, \\
     v_3\colon&  \votef acdb,
     &v_4\colon& \votef bcda.
  \end{align*}
  The voters in $V_2$ have the following preference orders:
  \begin{align*}
     v_5\colon&  \votef bcda,
     &v_6\colon& \votef bcda, \\
     v_7\colon&  \votef bcda,
     &v_8\colon& \votef acdb.
  \end{align*}
  For $k=2$, in both $E_1$ and $E_2$, Greedy-CC picks $\{a,b\}$.
  However, in $E_1 + E_2$ it picks $\{a,c\}$ and $\{b,c\}$ (to see
  that Greedy-CC is non consistent it suffices to note that in
  $E_1+E_2$ it has to pick $c$ in the first iteration because $c$ is
  the unique Borda winner in this election).  %
\end{proof}

We now consider homogeneity.  Naturally, committee
scoring rules are homogeneous because consistency implies 
homogeneity. For other rules, 
the situation is more complex.

\begin{theorem}
  Both $\ell_{\min}$-CC and Greedy-CC satisfy homogeneity.
\end{theorem}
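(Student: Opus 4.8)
The plan is to treat the two rules separately, but both via the same observation: duplicating the electorate scales the relevant objective in a way that preserves the set of committees the rule can output. Throughout I would keep in mind that, since $\ell_{\min}$-CC refers to the Chamberlin--Courant rule with the Borda satisfaction function $\alpha^m_\borda$, the argument only uses that $\alpha^m_\borda$ is a fixed function of a voter's position, not its specific form.

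For $\ell_{\min}$-CC, I would first pin down which committees win. Since, for a fixed size-$k$ committee $W$, the assignment that gives each voter her favourite member of $W$ simultaneously maximises every voter's individual satisfaction, and since enlarging a committee can only (weakly) increase the egalitarian value, one obtains the clean characterisation that the winning committees are exactly the size-$k$ sets $W$ maximising $g(W,E) := \min_{v \in V} \alpha^m_\borda(\pos_v(\Phi^\cc(W)(v)))$; I would spell out these two monotonicity-type points together with the role of the ``fill in missing members arbitrarily'' clause. The key step is then immediate: the voter list of $tE$ is $t$ copies of $V$, so for every $W$ the multiset $\{\pos_v(\Phi^\cc(W)(v)) : v \in tV\}$ is $t$ copies of the one for $V$, and a minimum is unchanged by duplication; hence $g(\cdot, tE) = g(\cdot, E)$ as functions on committees, the sets of maximisers coincide, and $\calR(tE,k) = \calR(E,k)$.

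For Greedy-CC, I would set $h(S,E) := \ell_1(\Phi^\cc(S)) = \sum_{v \in V} \alpha^m_\borda(\pos_v(\Phi^\cc(S)(v)))$ and observe that summing over $tV$ multiplies this by $t$, so $h(S, tE) = t \cdot h(S,E)$. Recalling that Greedy-CC builds chains $\emptyset = W_0 \subset W_1 \subset \cdots \subset W_k$, where $c_i$ is chosen to maximise $h(W_{i-1} \cup \{c_i\}, \cdot)$, and that under parallel-universes tie-breaking $\calR(E,k)$ is the set of all attainable $W_k$, I would prove by induction on $i$ that the family of chains $W_0 \subset \cdots \subset W_i$ attainable on $(E,k)$ equals the one attainable on $(tE,k)$. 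The base case is trivial; for the inductive step, given a prefix $W_{i-1}$ attainable on both, a candidate maximises $h(W_{i-1} \cup \{c\}, tE)$ iff it maximises $h(W_{i-1} \cup \{c\}, E)$ (because $t > 0$), so the possible one-step extensions coincide. Taking $i = k$ gives homogeneity.

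The two ``$t$ copies'' computations are entirely routine, so the only step needing genuine care is the characterisation of $\calR(E,k)$ for $\ell_{\min}$-CC in terms of $g(\cdot,E)$: one has to argue precisely that every committee output by the rule (an image of an optimal assignment, possibly padded with arbitrary candidates) has $g$-value equal to the optimum, and conversely that every size-$k$ set of optimal $g$-value is indeed output. That is the part I expect to be the main obstacle.
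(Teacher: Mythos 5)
Your proposal is correct; note that the paper itself omits this proof as ``straightforward,'' and your argument is exactly the intended one: the $\ell_{\min}$ objective is invariant under duplicating the electorate while the $\ell_1$ objective scales by $t$, so in both cases the sets of optimal committees (respectively, of attainable greedy chains under parallel-universes tie-breaking) coincide for $E$ and $tE$. The one step you flag as delicate---characterising the $\ell_{\min}$-CC winners as the size-$k$ sets maximising $g(W,E)=\min_{v\in V}\alpha^m_\borda(\pos_v(\Phi^\cc(W)(v)))$, accounting for the arbitrary padding when $\|\Phi(V)\|<k$---does go through as you describe, since $\Phi^\cc(W)$ is itself a $k$-assignment whose value weakly dominates that of any optimal assignment with image contained in $W$.
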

 We omit these proofs since they are straightforward.
 Interestingly, neither of the variants of Monroe's rule is
 homogeneous, although, as we will see in
 Theorem~\ref{thm:homogeneity}, $\ell_1$-Monroe, $\ell_{\min}$-Monroe
 are partially homogeneous.
\begin{proposition}
  $\ell_1$-Monroe, $\ell_{\min}$-Monroe, and Greedy-Monroe are not
  homogeneous.
\end{proposition}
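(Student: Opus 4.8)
The plan is to isolate the one structural feature that makes all three variants of Monroe's rule non-homogeneous: the Monroe criterion forces every elected candidate to represent a group of $\lfloor n/k\rfloor$ or $\lceil n/k\rceil$ voters, so the multiset of admissible group sizes is determined by $n$ and $k$ alone. When $k\nmid n$ this multiset genuinely mixes two sizes, but after replacing $E$ by $tE$ with $k\mid tn$ all groups must be equal; hence the family of admissible assignments (and, for Greedy-Monroe, also the schedule specifying which round gets the larger group) is not invariant under duplicating the electorate, and the optimal committee can change. This is exactly why the summary table attaches the qualifier ``$n$ divisible by $k$'' to the homogeneity of $\ell_1$-Monroe and $\ell_{\min}$-Monroe, and a related qualifier to Greedy-Monroe. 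Accordingly I would exhibit one small election $E$ with $k\nmid n$ on which all three rules agree, and show that on $2E$ (where $k\mid 2n$) each returns a different set of winning committees.

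Concretely, take $C=\{a,b,c,d\}$, $k=2$, and three voters with preference orders $v_1\colon \fourvote bcad$ and $v_2,v_3\colon \fourvote acdb$, using the Borda satisfaction function. Here $b$ is a ``niche'' candidate, adored by $v_1$ and ranked last by everyone else; $c$ is a universal compromise, sitting second on every ballot; and $a$ is broadly liked. In $E$ the admissible group sizes are $\{2,1\}$, so under $\{a,b\}$ one can assign $a$ to the pair $\{v_2,v_3\}$ and $b$ to the singleton $\{v_1\}$ with nobody misrepresented; a short check over the $\binom{4}{2}$ committees shows $\{a,b\}$ is the \emph{unique} winner for $\ell_1$-Monroe, for $\ell_{\min}$-Monroe, and for Greedy-Monroe (whose first round uses a group of size $\lceil 3/2\rceil=2$ and picks $a$, and whose second round uses a group of size $1$ and picks $b$). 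In $2E$ the admissible group sizes become $\{3,3\}$, so the group representing $b$ must now absorb one of the four $a$-lovers, each of whom ranks $b$ last. This penalty is large enough that $\{a,c\}$ ties $\{a,b\}$ under $\ell_1$-Monroe, strictly beats it under $\ell_{\min}$-Monroe (there every assignment keeping $b$ on the committee has a voter of satisfaction $0$, while $\{a,c\}$ can guarantee satisfaction $2$ to everyone), and becomes reachable under Greedy-Monroe (both of whose rounds now use group size $3$, so round two is a tie between $b$ and $c$, and under parallel-universes tie-breaking $\{a,c\}$ is output). In each case $\calR(2E,2)\neq\calR(E,2)$, contradicting homogeneity.

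To finish I would simply tabulate, for each of the six committees, the value of the best admissible Monroe assignment --- once with group sizes $\{2,1\}$, once with $\{3,3\}$ --- for the utilitarian objective $\ell_1$, repeat for the egalitarian objective $\ell_{\min}$, and separately run the two rounds of Greedy-Monroe on $E$ and on $2E$. Each of these is only a few lines of arithmetic.

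The part needing the most care is calibrating the profile so that $\{a,b\}$ is the \emph{strict} unique optimum on $E$ (so that its winner set is genuinely a singleton) while the duplication penalty on $b$'s group is large enough on $2E$ for $\{a,c\}$ to catch up, all three facts holding simultaneously for the same $E$: with only four candidates the Borda scale leaves little slack, which is why $b$ sits dead last on the $a$-lovers' ballots. One must also remember the ``fill in the missing committee members arbitrarily'' clause in the definition of the Monroe rules; in this example, however, every optimal assignment already uses both committee slots, so that clause is never triggered, and checking that no third committee sneaks into either winner set is immediate from the case analysis.
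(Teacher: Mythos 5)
Your proposal is correct and takes essentially the same approach as the paper: a single three-voter, $k=2$ profile in which the $2$--$1$ split permitted by the Monroe rounding yields a unique perfectly-matched committee, which is disrupted when doubling forces a $3$--$3$ split (the paper's counterexample is the same construction with the roles of the ``niche'' and ``compromise'' candidates relabeled). The arithmetic in your example checks out for all three rules.
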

\begin{proof}
  Consider an election with candidate set $C = \{a,b,c,d\}$ and three
  voters %
 $v_1$, $v_2$, and $v_3$, with the following preference orders:
  \newcommand{\vote}[4]{#1 \succ #2 \succ #3 \succ #4}
  \begin{align*}
     v_1\colon&  \vote abdc,
     &v_2\colon& \vote abdc,
     &v_3\colon& \vote cbda.
  \end{align*}
  We seek a committee of size $k = 2$.  
For $\ell_1$-Monroe,
  Greedy-Monroe and $\ell_{\min}$-Monroe, the unique
  winning committee of size $2$ is $\{a,c\}$. However, for $2E$ the winner sets for these rules
  include $\{a,b\}$.
\end{proof}

On the positive side, if the number of voters is divisible by the
size of the committee, then $\ell_1$-Monroe and $\ell_{\min}$-Monroe
are homogeneous. In essence, this means that variants of Monroe fail
homogeneity due to rounding problems in the Monroe criterion. One
solution would be to clone each
voter $k$ times when seeking a committee of size $k$. 
We do not consider this modification of Monroe's rule
here, but it would be interesting to see how the satisfaction of a
committee elected in this way compares to that elected without 
cloning. %

\begin{theorem}
\label{thm:homogeneity}
  Both $\ell_1$-Monroe and $\ell_{\min}$-Monroe satisfy homogeneity,
  provided that the number of voters $n$ in the election is divisible by
  the size $k$ of the committee to be selected.
\end{theorem}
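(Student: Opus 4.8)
The plan is to show that when $k \mid n$, the ``rounding slack'' in the Monroe criterion disappears, so the set of admissible assignment functions behaves well under duplication of voters, and then to transfer the optimality argument between $E$ and $tE$. Concretely, when $n = kq$ with $q \in \naturals$, the Monroe criterion $\lfloor n/k \rfloor \le \|\Phi^{-1}(c)\| \le \lceil n/k \rceil$ collapses to the \emph{exact} constraint $\|\Phi^{-1}(c)\| = q$ for every elected candidate $c$; and in the election $tE$ we have $tn = k(tq)$ voters, so admissible assignments for $tE$ must place exactly $tq$ voters on each elected candidate. I would first record this observation, since it is what makes the argument go through and is precisely what fails in the general case.

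Next I would establish the ``easy direction'': given an admissible (balanced, exact-$q$) assignment $\Phi$ for $E$ electing committee $W$, one builds an admissible assignment $t\Phi$ for $tE$ by assigning, in each of the $t$ copies of $V$, the $i$-th copy of voter $v$ to $\Phi(v)$. This $t\Phi$ elects the same committee $W$, places exactly $tq$ voters on each member of $W$, and satisfies $\ell_1(t\Phi) = t \cdot \ell_1(\Phi)$ (respectively $\ell_{\min}(t\Phi) = \ell_{\min}(\Phi)$, since the multiset of voter satisfactions is just $t$ copies of the original multiset). Hence $\OPT_{\ell}(tE,k) \ge t \cdot \OPT_{\ell}(E,k)$ for $\ell = \ell_1$ and $\OPT_{\ell_{\min}}(tE,k) \ge \OPT_{\ell_{\min}}(E,k)$, and in particular any committee winning in $E$ achieves the optimum value in $tE$, so $\calR(E,k) \subseteq \calR(tE,k)$.

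The reverse inclusion is the main obstacle and requires a small exchange/averaging argument: from an optimal admissible assignment $\Psi$ for $tE$ I must extract an admissible assignment for $E$ that is at least as good (per copy). The key point is that within the $t$ copies of a fixed voter $v$, $\Psi$ may send the copies to different committee members, but I can re-route: for each original voter $v$, among its $t$ copies pick the copy assigned to the candidate that $v$ ranks best (for $\ell_1$) — but this naive choice may break balance. The clean fix is to argue on committees directly: fix the winning committee $W = \Psi(tV)$; for the utilitarian case, the optimal assignment of $tE$ to $W$ is $\Phi^{\text{bal}}$, the \emph{balanced} assignment maximizing $\ell_1$ subject to each member of $W$ getting exactly $tq$ voters, and by a standard min-cost-flow / matching argument this optimum, divided by $t$, equals the optimum of the analogous balanced assignment problem for $E$ with $W$ and quota $q$ (since the cost matrix for $tE$ is $t$ identical blocks, an optimal integral flow can be taken to be $t$ identical copies of an optimal flow for one block — here one invokes integrality of the transportation polytope). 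Thus $\OPT_{\ell_1}(tE,k) = t\cdot\OPT_{\ell_1}(E,k)$ with the \emph{same} optimizing committees, giving $\calR(tE,k) \subseteq \calR(E,k)$. For $\ell_{\min}$ the analogous statement is that the best balanced assignment of $tV$ to $W$ has min-satisfaction equal to that of the best balanced assignment of $V$ to $W$, again because $t$ identical copies of an optimal single-block assignment is feasible and optimal; here the relevant combinatorial fact is a Hall-type / deficiency condition for the existence of a balanced assignment guaranteeing every voter satisfaction $\ge \tau$, which is insensitive to the multiplicity $t$. Combining both inclusions yields $\calR(tE,k) = \calR(E,k)$, i.e.\ homogeneity, for both $\ell_1$-Monroe and $\ell_{\min}$-Monroe when $k \mid n$. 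I expect the flow-decomposition step (that an optimal transportation solution on $t$ identical blocks can be chosen block-symmetric) to be the one needing the most care, though it is standard.
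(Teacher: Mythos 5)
Your proposal is correct in outline, but it reaches the conclusion by a genuinely different route than the paper. The paper works directly with an optimal assignment $\Phi$ for $tE$ and rebalances it combinatorially: it observes that swapping the representatives of two copies $v_{i,\ell}, v_{j,\ell}$ of the same original voter leaves the entire multiset of satisfactions unchanged, builds a bipartite graph between the voters of one copy $V_1$ and $n/k$ ``clones'' of each winner (with edges governed by $\rep(\ell)$), verifies Hall's condition, and uses the resulting perfect matching to make each copy of $V$ individually balanced; iterating over copies and then equalizing the per-copy assignments yields an assignment that restricts to a witness for $E$. Because the swap operation preserves the full multiset of satisfactions, this single argument covers $\ell_1$ and $\ell_{\min}$ simultaneously. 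You instead fix a committee $W$ and compare the committee-wise optima: for $\ell_1$ you aggregate the $tE$-optimal assignment over copies, divide by $t$ to get a fractional point of the single-block transportation polytope of value $\OPT(tE)/t$, and invoke integrality to conclude the per-committee optimum scales exactly by $t$; for $\ell_{\min}$ you separately observe that the deficiency (Hall) condition for a balanced assignment with min-satisfaction at least $\tau$ is invariant under multiplying all multiplicities by $t$. Both arguments ultimately rest on Hall-type facts, but yours is cleaner for $\ell_1$ (standard LP integrality) at the cost of needing a second, non-linear argument for $\ell_{\min}$, while the paper's swap-and-match argument is more hands-on but uniform. One small point of order: in your ``easy direction'' you conclude $\calR(E,k)\subseteq\calR(tE,k)$ from $\OPT(tE,k)\ge t\cdot\OPT(E,k)$ alone, which is premature --- a committee achieving value $t\cdot\OPT(E,k)$ in $tE$ is only known to be winning there once the reverse inequality is in hand; since you do establish the equality of committee-wise optima later, the argument closes, but the two inclusions should be drawn only after that equality is proved.
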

\begin{proof}
  Let $\calR\in\{\ell_1$-Monroe, $\ell_{\min}$-Monroe$\}$, pick
  an election $E = (C,V)$, and let $k$ be a positive integer 
  that divides $n=\|V\|$. We will show that 
  $\calR(E,k) = \calR(tE,k)$ for each $t>0$.

  Let $W$ be a
  committee that wins in $tE$. %
  We refer to the members of $W$ as the
  \emph{winners}. Let $\Phi \colon tV \rightarrow C$ be an assignment
  of candidates to voters witnessing that %
   $W\in\calR(tE,k)$. By Monroe's criterion, for each 
  $w \in W$ we have $\|\Phi^{-1}(w)\| = \frac{nt}{k}$.  
  We now proceed as follows. First, we show how to
  transform $\Phi$ into an assignment $\Phi'$ such that (a) under
  $\Phi'$ each winner represents exactly $\frac{n}{k}$ voters in each
  copy of $V$ and (b) the satisfaction of the voters under $\Phi'$
  is the same as under $\Phi$. We then prove
  that $\Phi'$ can be further transformed into $\Phi''$ that uses the
  same assignment for each copy of $V$. 

  Let $V_1, \ldots, V_t$ be $t$ copies of $V$ so
  that $tV = V_1 + \cdots + V_t$; we assume that within each $V_i$,
  $i\in[t]$, voters are listed in the same order.  For each $i\in[t]$,
  $\ell\in[n]$, we write $v_{i,\ell}$ to denote the $\ell$-th voter in
  $V_i$.  Our proof relies on the observation that for each $\ell \in
  [n]$ and each $i,j \in [t]$, if we take some assignment function
  $\Phi$ and swap its values for $v_{i,\ell}$ and $v_{j,\ell}$, then
  we get an assignment function $\Phi'$ with the same societal
  satisfaction.  To exploit this fact, for each $\ell \in [n]$, we
  write $\rep(\ell) = \{ \Phi(v_{i,\ell}) \mid i \in [t] \}$ to denote
  the set of the representatives (under $\Phi$) of the set of $\ell$'th
  voters in the profiles $V_1, V_2, \ldots, V_t$.

  We now show that, using the transformations just described, it is possible to transform $\Phi$ into an
  assignment $\Phi^{(1)}$ which assigns each winner from $W$ to
  exactly $\frac{n}{k}$ voters from $V_1$. We use the fact that for
  each $\ell \in [n]$, we can assign an arbitrary member of
  $\rep(\ell)$ to voter $v_{1,\ell}$ by swapping the representative of
  $v_{1,\ell}$ with the representative of $v_{j,\ell}$, for
  appropriately selected $j$. We build the following bipartite graph
  (the vertices are partitioned into the set of voter vertices and the
  set of winner vertices):
  \begin{enumerate}
  \item The voter vertices are exactly the voters from $V_1$.
  \item For each winner $w \in W$, we have $\frac{n}{k}$ winner
    vertices $w^1, \ldots, w^{\frac{n}{k}}$ which we will call {\em clones} of $w$.
  \item For each voter vertex $v_{1,\ell} \in V_1$ and each winner $w
    \in W$, there are edges between $v_{1,\ell}$ and each of the
    winner vertices $w^1, \ldots, w^{\frac{n}{k}}$, if $w \in
    \rep(\ell)$, and no such edges exist, if $w \notin
    \rep(\ell)$.
  \end{enumerate}
  It is clear that if there is a perfect matching between the voter
  vertices and the winner vertices in this graph, then it is possible
  to transform $\Phi$ into $\Phi^{(1)}$ that assigns each winner to
  exactly $\frac{n}{k}$ voters in $V_1$. Given such a perfect
  matching, we transform $\Phi$ so that for each voter $v_{1,\ell}$
  that is matched to some vertex $w^u$, we swap this voter's
  representative (if it is not  already $w$) with the representative  of some voter $v_{j,\ell}$
  for whom $\Phi(v_{j,\ell}) = w$. This is possible since $w\in \rep(\ell)$. Thus it remains to show that our
  graph has a perfect matching. To this end, we use the famous Hall's
  theorem.

  For each subset $V'$ of voter vertices, we define the set of
  neighbors of $V'$, denoted $N(V')$, to be the set of those winner
  vertices that are connected to some member of $V'$. Hall's theorem
  says that there is a perfect matching in our graph if and only if
  for every set $V'$ of voter vertices we have $\|N(V')\| \geq
  \|V'\|$.

  Let $V'$ be some arbitrary subset of voter vertices.  By the
  construction of our graph, we know that if $N(V')$ contains one clone of $w$, then it contains all of them, hence $\|N(V')\|$ is of the form
  $q\frac{n}{k}$, where $q$ is a positive integer, and that $N(V')$
  corresponds to a set of $q$ winners, each of whom has $\frac{n}{k}$
  clones in $N(V')$. This means that in $tV$ there is a group of
  $t\|V'\|$ voters represented under $\Phi$ by $q$ winners.  Since in
  $tV$ each winner is assigned to exactly $t\frac{n}{k}$ voters, it
  must be that $t\|V'\| \leq qt\frac{n}{k}$. This is
  equivalent to $\|V'\| \leq q\frac{n}{k} = \|N(V')\|$, the
  requirement of the Hall's theorem. Thus there is a perfect matching
  for our graph and the desired assignment $\Phi^{(1)}$ exists.

  Now, removing $V_1$ from consideration and repeating the above
procedure for $V_{2}$, then $V_{3}$, and so on, we can eventually transform
  $\Phi^{(1)}$ into $\Phi'$, where each winner represents exactly
  $\frac{n}{k}$ voters from each sequence of voters $V_{i}$, $i \in
  [t]$. It is then easy to see that we can replace $\Phi'$ with
  $\Phi''$ that uses the same assignment of winners to voters within
  each copy of $V_i$ (each $V_i$ has an assignment with the same
  societal satisfaction; otherwise the original assignment $\Phi$
  would not have been optimal). This implies that $W \in \calR(E,k)$
  because otherwise there would be some set $W' \in \calR(E,k)$ that
  would give lower satisfaction to the voters in $tE$ than $W$
  gives. Analogously, this means that every $W' \in \calR(E,k)$ also
  belongs to $\calR(tE,k)$.
\end{proof}

\begin{proposition}
  Greedy-Monroe fails homogeneity even if the size of the committee divides the number of voters.
\end{proposition}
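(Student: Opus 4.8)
The plan is to exhibit, for $k=2$, a four-voter election $E=(C,V)$ (so that $k$ divides $\|V\|$) on which Greedy-Monroe selects a strictly smaller family of committees than on $2E$.

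The mechanism I would exploit is the following. Since $k\mid n$, Greedy-Monroe uses $n_i=n/k$ in every iteration, so no rounding ambiguity arises. With $k=2$, iteration~$1$ picks the pair $(c_1,V')$ with $\|V'\|=n/2$ that maximizes the Borda score of $c_1$ over $V'$; for a fixed candidate the optimal $V'$ is just the set of $n/2$ voters ranking that candidate highest, so iteration~$1$ can be tied in the choice of $V'$ even when $c_1$ is the unique best first pick. Iteration~$2$ then sets $c_2$ to be the Borda winner, over $C\setminus\{c_1\}$, of the $n/2$ remaining voters. In $2E$ iteration~$1$ assigns $2\cdot(n/2)=n$ voters, and crucially its optimal assignments include ``mixed'' ones that take one copy of some original votes and the complementary copy of others; the residual profile left by such a mixed assignment differs from $2\rho$ for every residual $\rho$ of $E$, so its Borda winner can be a candidate that is never selected on $E$. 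One also notes that every branch of $E$ still occurs (with residuals doubled) in $2E$, so the only way to break homogeneity is to produce such a new committee in $2E$.

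Concretely, I would take $C=\{a,b,c,d,e\}$, $k=2$, and the four voters
\begin{align*}
 v_1\colon& \fivevote{a}{e}{b}{c}{d}, & v_2\colon& \fivevote{b}{a}{c}{d}{e},\\
 v_3\colon& \fivevote{c}{a}{e}{b}{d}, & v_4\colon& \fivevote{d}{e}{b}{c}{a}.
\end{align*}
First one checks that $a$ is the unique candidate whose best size-$2$ Borda total is maximal: that total is $7$, attained only by the pairs $\{v_1,v_2\}$ and $\{v_1,v_3\}$, whereas every other candidate attains at most $6$. Branch $\{v_1,v_2\}$ leaves the residual $\{v_3,v_4\}$, whose Borda winners in $C\setminus\{a\}$ are $c$ and $e$; branch $\{v_1,v_3\}$ leaves $\{v_2,v_4\}$, whose Borda winner is $b$. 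Hence Greedy-Monroe returns exactly $\{\{a,b\},\{a,c\},\{a,e\}\}$ on $E$. In $2E$ (eight voters) $a$ is still the unique first pick, and its optimal size-$4$ voter sets consist of the two copies of $v_1$ together with any two of the four copies of $v_2$ and $v_3$; taking one copy of $v_2$ and one copy of $v_3$ leaves the residual consisting of one copy each of $v_2,v_3$ and two copies of $v_4$, on which the Borda scores over $C\setminus\{a\}$ are $b\colon 9$, $d\colon 9$, $c\colon 8$, $e\colon 8$, so that $\{a,d\}$ is a winning committee of $2E$. Since $\{a,d\}$ is not a winning committee of $E$ and $2$ divides $4$, homogeneity fails.

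I expect the genuine difficulty to be locating such an instance: arranging a first-round tie is easy, but in most small examples the Borda winner of the mixed residual is already produced by some branch on $E$ (typically the candidate ranked first in a vote that gets duplicated in the mixed residual dominates, and that same candidate also wins one of the residuals of $E$), so the lower portions of $v_2,v_3,v_4$ must be chosen carefully to make the Borda totals separate the intended ``new'' candidate. Verifying that $a$ is forced as the first pick, enumerating the optimal assignments in $E$ and in $2E$, and checking divisibility are all routine once the instance is fixed.
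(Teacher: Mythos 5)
Your proof is correct and follows essentially the same strategy as the paper's: a concrete election with $k \mid n$ where the first Greedy-Monroe iteration is tied in its choice of the assigned voter group, so that in $2E$ a ``mixed'' selection across the two copies leaves a residual profile whose Borda winner yields a committee (here $\{a,d\}$) that never arises on $E$. The paper uses a $4$-candidate, $6$-voter instance rather than your $5$-candidate, $4$-voter one, but the mechanism and the verification steps are the same.
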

\begin{proof}
  Let $E = (C,V)$ be an election with $C = \{a,b,c,d\}$ and $V = \{v_1, \ldots, v_6\}$.
  The voters have the following preference orders:
  \newcommand{\vote}[4]{#1 \succ #2 \succ #3 \succ #4}
  \begin{align*}
     v_1\colon&  \vote abcd,
     &v_2\colon& \vote badc,
     &v_3\colon&  \vote abcd,\\
     v_4\colon& \vote bcda,
     &v_5\colon&  \vote cadb,
     &v_6\colon& \vote dacb.
  \end{align*}
  We seek a committee of size $k=2$.  Under Greedy-Monroe, the
  tied-for-winning committees are $\{a,b\}$ and $\{a,c\}$. 
  Indeed, in the first iteration Greedy-Monroe may pick $a$. It
  can assign $a$ to %
  $(v_1,v_2,v_3)$, $(v_1,v_3,v_5)$, or $(v_1,v_3,v_6)$.  Depending on
  the choice, in the second iteration it picks either $c$ or $b$. In the first iteration it can also pick $b$, then in the second iteration it picks $a$.

  Now consider election $2E$. For each $i\in\{1,\dots, 6\}$, $j\in\{1,
  2\}$ we write $v_i^j$ to refer to $v_i$ in the $j$-th copy of~$E$.
  In the first iteration, Greedy-Monroe again is allowed to pick $a$
  (due to parallel-universes tie-breaking). By parallel-universes
  tie-breaking, it is allowed to assign $a$ to
  $v_1^1,v_1^2,v_2^1,v_3^1,v_3^2,v_5^1$. Thus, in the second
  iteration, the remaining votes are:
  \begin{align*}
     v^1_6\colon& \vote dacb,
     &v^2_2\colon& \vote badc,
     &v^2_5\colon& \vote cadb,\\
     v^1_4\colon& \vote bcda,
     &v^2_4\colon& \vote bcda,
     &v^2_6\colon& \vote dacb.
  \end{align*}
  The unique Borda winner of this election is $d$, so
  Greedy-Monroe picks $d$ in the second iteration. This means that
  $\{a,d\}$ is a winning committee in $2E$, a contradiction.
\end{proof}

The above proposition relies heavily on parallel-universes tie-breaking.
It is possible to refine the intermediate tie-breaking
procedure of Greedy-Monroe so that it becomes homogeneous when $k$
divides $\|V\|$. We omit the details here.

\section{Related Literature}\label{sec:literature}

With all the preceding discussions in hand, it is high time to discuss
how our results and approaches compare to others in the literature.
Unfortunately, the literature on the properties of multiwinner rules
is still relatively sparse, compared to that regarding single-winner
rules, and is scattered between different fields of research, ranging
from behavioral science, through political science, social choice
theory, to computer science.  Here we review those papers that are in
spirit closest to our work.

The paper most closely related to ours is that of Felsenthal and Maoz
\cite{fel-mao:j:norms}. They consider four $k$-choice functions, the
Plurality rule (i.e., in our terminology, the SNTV rule), the Approval
rule,\footnote{This rule is not preference-based} the Borda rule
(i.e., $k$-Borda), and STV. They adapt a range of single-winner
normative properties to the multiwinner setting and study them in the
context of these rules. The main two differences between our paper and
theirs are as follows. First, we study a somewhat different set of
rules (in particular we do not consider the Approval rule, but we do
consider Bloc and a number of rules focused on proportional
representation).  Second, the set of axioms that we consider is closer
in spirit to the goal of achieving proportional representation than
theirs. On the one hand, we introduce some new axioms that try to
capture proportional representation (such as consensus committee and
solid coalitions properties), and, on the other hand, we do not
consider axioms related to the Condorcet principle (and Felsenthal and
Maoz do).  Naturally, our papers also have many similarities.  Both we
and Felsenthal and Maoz consider monotonicity, though our view is
slightly more detailed (we study two variants, candidate monotonicity
and non-crossing monotonicity).
Both us and Felsenthal and Maoz~\cite{fel-mao:j:norms} study committee
monotonicity, which they call continuity. We prefer our name since
continuity in social choice refers to a different
property~\cite{you:j:scoring-functions}.  Finally, both Felsenthal and
Maoz and us study the consistency property, which is crucial when one
considers rules based on scoring protocols (indeed, as shown by
Young~\cite{you:j:scoring-functions}, up to some tie-breaking
specifics, scoring protocols are the only single-winner rules that are
consistent, and, thus, it is natural to expect that a similar result
holds in the multiwinner setting; very interestingly, consistency also
plays an important role in Bock et al.'s study of consensus-based
multiwinner rules~\cite{boc-day-mcm:j:consensus-committee}).

Young's consistency-based characterization of scoring rules, as well
as his characterization of the Borda rule, have already inspired
researchers working on multiwinner rules.  For example, Debord
extended Young's characterization of Borda to the case of $k$-Borda,
by showing that it is the rule that satisfies the axioms of
neutrality, faithfulness, consistency, and the cancellation property
(we point readers to his work for details on these
properties~\cite{deb:j:k-borda}). In his follow-up work, Debord also
introduced some new axioms for multiwinner rules.  These axioms,
however, are geared toward the case of rules that elect what he calls
$k$-elites~\cite{deb:j:prudent}, which are multiwinner analogs of
Condorcet winners.  
Indeed, there is a number of multiwinner rules that focus on electing
a committee by following the principle of Condorcet consistency;
related issues were considered, for example, by Barber\`a and
Coelho~\cite{bar-coe:j:non-controversial-k-names}, Kaymak and
Sanver~\cite{kay-san:j:condorcet-winners}
Fishburn~\cite{fis:j:majority-committees},
Ratliff~\cite{rat:j:condorcet-inconsistencies}, and others. Since we
do not consider Condorcet rules, we do not present this literature in
depth.

The study of committee selection rules in approval voting framework is
well-advanced. Kilgour~\cite{kil-handbook} describes a number of
approval-based voting rules that elect a committee of fixed size and
proved some of their basic properties, Kilgour and Marshall
\cite{kil-mar:j:minimax-approval} give a nice survey of approval-based
committee selection rules and add some new ones. Aziz et al
\cite{aziz-elk-etc} have recently introduced an axiom called Justified
Representation and showed that in their framework the only committee
selection rule that satisfies it is Proportional Approval Voting
(PAV). Again, we do not focus on approval-based rules so we point the
readers to the above-cited papers for more detailed discussions.

Multiwinner voting often raises controversies. For example, Staring
\cite{sta:j:paradoxes} demonstrates that Bloc rule is extremely
committee non-monotonic (he calls it {\em increasing-committee-size
  paradox}). He demonstrates a profile where the winning committee of
size three is disjoint from the winning committee of size two, and the
winning committee of size four is disjoint from those for sizes two
and three. Similar phenomena is observed by
Ratliff~\cite{rat:j:condorcet-inconsistencies} for other rules, who
even titled his paper ``Some Startling Inconsistencies when Electing
Committees'' (we mention that, in another paper, he also points to some
possibly undesirable behavior in other multiwinner
settings~\cite{rat:j:selecting-committees}). On the other hand, in
this paper we view the fact that a rule fails to be committee
monotonic as a rather expected feature of every rule that attempts to
achieve proportional representation of the voters. This is yet another
reason to believe that multiwinner rules can only be judged in the
context of their applications. A rule for selecting a parliament
should satisfy quite different desiderata than a rule used for
shortlisting.

Multiwinner voting rules are more complex than the single-winner ones
also from the computational standpoint. While there are some
computationally hard single-winner rules (most notably the rules of
Kemeny, Young, and
Dodgson~\cite{bar-tov-tri:j:who-won,hem-hem-rot:j:dodgson,hem-spa-vog:j:kemeny,rot-spa-vog:j:young,bet-guo-nie:j:dodgson-parametrized}),
for the case of multiwinner voting, this problem seems to be much
more pronounced.  For example, Darmann~\cite{dar:j:condorcet-hard}
showed that testing if a given committee satisfies a particular
Condorcet criterion is $\np$-hard. Similarly, it is $\np$-hard to
compute winners under Monroe's
rule~\cite{pro-ros-zoh:j:proportional-representation,bet-sli-uhl:j:mon-cc},
Chamberlin--Courant's
rule~\cite{pro-ros-zoh:j:proportional-representation,bou-lu:c:chamberlin-courant,bet-sli-uhl:j:mon-cc},
and many other similar rules~\cite{sko-fal-lan:c:multiwinner}.  The
same holds for a number of approval-based rules such as
PAV~\cite{sko-fal-lan:c:multiwinner,azi-gas-gud-mac-mat-wal:c:multiwinner-approval}
or Minimax Approval rule~\cite{leg-mar-meh:c:minimax-approval} of
Brams et al.~\cite{bra-kil-san:j:minimax-approval}, or even for STV
with parallel-universes tie-breaking~\cite{con-rog-xia:c:mle}. In
effect, many researchers have sought efficient approximate algorithms
for these rules, many of which can be viewed as full-fledged voting
rules in their own right. Indeed, this is how the
Greedy-CC~\cite{bou-lu:c:chamberlin-courant} and
Greedy-Monroe~\cite{sko-fal-sli:j:multiwinner} rules that we study
came to be.  Nonetheless, there is also a number of natural
multiwinner rules that are polynomial-time computable. These include,
for example, rules based on electing candidates with the $k$ highest
scores according to some scoring protocol (e.g., SNTV, Bloc,
$k$-Borda), but also some other, more complicated rules (see, e.g.,
the work of Klamler et al.~\cite{kla-pfe-ruz:j:weighted-committee}).

\section{Conclusions}\label{sec:conclusions}
\noindent
We formalized a number of natural properties for multiwinner voting
rules and conducted a comprehensive comparison of ten prominent
multiwinner rules with respect to these properties.  The choice of
these rules was guided by the fact that each of them, in some broad
sense, is based on some single-winner scoring rule.  In the course of
our study, we identified two natural families of multiwinner rules,
best-$k$ rules and committee scoring rules. The latter class is
particularly interesting because it contains both rules of low
complexity, such as $k$-Borda, SNTV, and Bloc, and rules with hard
winner-determination problems, such as $\ell_1$-CC.
We have put forward a framework for studying multiwinner rules and
considered a number of their properties.  We believe that our results
give a better understanding of applicability of various multiwinner
rules to particular tasks. For example, we see that best-$k$ rules are
well-suited for picking a group of finalists in a competition, whereas
rules based on the Monroe criterion ($\ell_1$-Monroe,
$\ell_{\min}$-Monroe, and Greedy-Monroe), as well as STV, seem to be
more appropriate for applications that require proportional
representation (e.g., parliamentary elections). In this context,
Greedy-Monroe is particularly interesting. It was derived as an
approximation algorithm for
$\ell_1$-Monroe~\cite{sko-fal-sli:j:multiwinner}, but it 
has
more appealing properties than the original
rule. %
We believe that Greedy-Monroe should be taken as %
a full-fledged voting rule.

Our results for $\ell_1$-CC and $\ell_{\min}$-CC are similar to those
for Monroe, but intuitively these rules are better suited for
applications such as movie selection (see the introduction) than, say,
parliamentary elections. The reason is that they may assign very
different numbers of voters to each winning candidate (naturally, one
could imagine rules for parliamentary elections where voters would be
represented by more than a single person---and thus different winning
candidates might represent different numbers of voters---but
$\ell_1$-CC and $\ell_{\min}$-CC do not operate on such basis).
Thus, if $\ell_1$-CC or $\ell_{\min}$-CC were used to elect a
parliament, this parliament would have to use weighted voting in its
operation. A recommendation system, on the other hand, simply needs to
present users with a ``committee'' of items so that as many customers
(voters) as possible would find at least one of them satisfying.
In this light, it is a bit disappointing that Greedy-CC, which was
designed as an approximation algorithm for $\ell_1$-CC, does not seem
to perform very well in our comparison.  Indeed, it fails to satisfy
the solid coalitions property (like $\ell_1$-CC but unlike
Greedy-Monroe) and it fails to satisfy the consensus committee
property (unlike every other rule that focuses on some form of
proportional representation).  This latter fact can be seen as a
consequence of Greedy-CC satisfying committee monotonicity (which we
argued to be undesirable from the point of view of proportional
representation).
After comparing Greedy-CC and Greedy-Monroe, it is tempting to simply
use Greedy-Monroe in Greedy-CC's place. 
However, perhaps a better idea would be to modify Greedy-Monroe to not
respect the Monroe's criterion.  Then the challenge would be to
optimally pick the numbers of voters that Greedy-Monroe considers in
each iteration. We believe that studying such variant of Greedy-Monroe
is an important research direction.
Finally, our results regarding SNTV are quite interesting.  It turns
out that, among all the rules that we consider, SNTV satisfies all the
properties that we defined (though it only satisfies unanimity in the
weak sense). One explanation for this fact is that it can be viewed as
an approximation of the rule that simply picks all the candidates that
are ranked first by at least one voter, and, in this sense, is close
to implementing direct democracy. Yet one should not forget that SNTV
ignores each voter's preferences beyond the top candidate and, thus,
inherits all negative features of the Plurality rule.

\bigskip

\noindent\textbf{Acknowledgements.} 
Piotr Faliszewski was supported in part by NCN grants
2012/06/M/ST1/00358 and 2011/03/B/ST6/01393, and by the Polish
Ministry of Science and Higher Education (under AGH University Grant
11.11.230.015 (statutory project)). Arkadii Slinko was supported by NZ Marsden Fund grant
(grant no. UOA 254).

\bibliographystyle{plain} 
\bibliography{grypiotr2006}

\end{document}